\documentclass[sigconf]{acmart}
\AtBeginDocument{%
  }

\usepackage{amsmath,amsfonts}

\usepackage{amssymb}
\usepackage{mathtools}
\usepackage{bm}
\usepackage{enumitem}
\usepackage{booktabs}
\usepackage{multirow}
\usepackage{subcaption}
\usepackage{microtype}

\newtheorem{assumption}{Assumption}

\newcommand{\E}{\mathbb{E}}
\newcommand{\R}{\mathbb{R}}
\newcommand{\Prob}{\mathbb{P}}
\newcommand{\Var}{\mathbb{V}}
\newcommand{\calA}{\mathcal{A}}

\newcommand{\profit}{\mathsf{Profit}}

\setcopyright{acmlicensed}
\acmDOI{}
\acmISBN{}
\acmConference[CCS '26]{ACM Conference on Computer and Communications Security}{November 2026}{The Hague, The Netherlands}
\acmYear{2026}
\copyrightyear{2026}

\begin{document}

\title{Economic Security of VDF-Based Randomness Beacons:\\Models, Thresholds, and Design Guidelines}




\settopmatter{authorsperrow=2}
\author{Zhenhang Shang}
\affiliation{%
  \institution{The Hong Kong University of Science and Technology}
  \city{Hong Kong}
  \country{China}}
\email{zshangab@connect.ust.hk}

\author{Kani Chen}
\affiliation{%
  \institution{The Hong Kong University of Science and Technology}
  \city{Hong Kong}
  \country{China}}


\begin{abstract}
 Randomness beacons based on Verifiable Delay Functions (VDFs) are increasingly proposed for blockchains and distributed systems, promising publicly verifiable delay and bias resistance. Existing analyses, however, treat adversaries purely as cryptographic entities and overlook that real attackers are economically motivated. A VDF may be sequentially secure, yet still vulnerable if a rational adversary can profit by purchasing faster hardware and exploiting reward spikes such as MEV opportunities.

We develop a formal framework for economic security of VDF-based randomness beacons. Modeling the attacker as a rational agent facing hardware speedup, operating costs, and stochastic rewards, we cast the attack decision as an optimal-stopping problem and prove that optimal behavior has a monotone threshold structure. This yields tight necessary and sufficient conditions relating delay parameters to adversarial cost and reward distributions. We extend the analysis to grinding, selective abort, and multi-adversary competition, demonstrating how each amplifies effective rewards and increases required delays.

Using realistic cloud costs, hardware benchmarks, and MEV data, we show that many proposed VDF delays, on the order of a few seconds, are economically insecure under plausible conditions. We conclude with deployable guidelines and introduce Economically Secure Delay Parameters (ESDPs) to support principled parameter selection in practical systems.
\end{abstract}

\begin{CCSXML}
<ccs2012>
   <concept>
      <concept_id>10002978.10002991.10002995</concept_id>
      <concept_desc>Security and privacy~Cryptographic protocols</concept_desc>
      <concept_significance>500</concept_significance>
   </concept>
   <concept>
      <concept_id>10002978.10003006.10003007</concept_id>
      <concept_desc>Security and privacy~Distributed systems security</concept_desc>
      <concept_significance>300</concept_significance>
   </concept>
   <concept>
      <concept_id>10002978.10003022.10003027</concept_id>
      <concept_desc>Security and privacy~Economics of security and privacy</concept_desc>
      <concept_significance>300</concept_significance>
   </concept>
   <concept>
      <concept_id>10003752.10010070.10010099</concept_id>
      <concept_desc>Theory of computation~Algorithmic game theory</concept_desc>
      <concept_significance>100</concept_significance>
   </concept>
</ccs2012>
\end{CCSXML}

\ccsdesc[500]{Security and privacy~Cryptographic protocols}
\ccsdesc[300]{Security and privacy~Distributed systems security}
\ccsdesc[300]{Security and privacy~Economics of security and privacy}
\ccsdesc[100]{Theory of computation~Algorithmic game theory}

\keywords{Verifiable Delay Functions, Randomness Beacons, Economic Security, Rational Adversaries, Blockchain Security, Cryptoeconomics}

\maketitle

\section{Introduction}

Randomness beacons \cite{choi2023sok}\cite{kelsey2019reference}\cite{raikwar2022sok} are a fundamental primitive in the design of secure distributed systems and cryptographic protocols. Blockchains and decentralized platforms rely on public randomness to select validators, sample committees, randomize protocol parameters, and drive lotteries or leader elections \cite{bunz2017proofs}. For such systems, security rests on a randomness source that adversaries cannot predict or influence.

Verifiable Delay Functions (VDFs) \cite{wu2022verifiable} have emerged as a powerful tool for constructing such beacons \cite{rotem2021simple}. Informally, a VDF is a function $f$ that requires a prescribed amount of \emph{sequential} computation to evaluate, yet whose output can be verified efficiently \cite{boneh2018verifiable}. If all parties are bound by the same physical limits on sequential computation, a VDF prevents any adversary from computing the beacon output significantly ahead of the honest participants.

Existing work on VDFs has focused on cryptographic properties: defining constructions, proving sequentiality under standard hardness assumptions, and engineering efficient implementations \cite{wu2022verifiable}. Security arguments typically assume a worst-case polynomial-time adversary who attempts to break soundness regardless of cost. However, this abstraction is misaligned with real attacker behavior in deployed systems such as public blockchains. Miners, validators, MEV searchers, and external investors are not arbitrary malicious entities, they are \emph{economically rational agents} who act to maximize expected profit \cite{hu2020research}\cite{zuniga2023maximizing}.

This gap is not merely conceptual. A VDF may be cryptographically secure in the traditional sense: no algorithm with limited parallelism can compute it substantially faster than honest parties \cite{attias2020preventing}, yet still vulnerable in settings where a rational attacker can profit by buying faster hardware and influencing the output. Conversely, parameters that seem marginal cryptographically may in fact be safe because the underlying economics make attacks unprofitable. For VDF-based randomness beacons, cryptographic security alone is not enough. A practical deployment must also meet an \textbf{\emph{economic security}} requirement: \emph{a rational adversary with realistic resources should have no profitable deviation from honest behavior}. This notion is inherently quantitative and context-dependent, it depends on the reward structure, hardware and energy prices, market volatility, and protocol-level incentives.

We emphasize that economic security is a genuine security property, not merely an economic curiosity. In deployed blockchain systems, the security of validator selection, committee sampling, and on-chain lotteries depends critically on the unpredictability and unbiasability of the randomness beacon. An economically motivated attack that biases the beacon output can lead to concrete security failures: biased validator selection enables censorship or double-spending, manipulated committee sampling undermines the safety of sharded execution, and predictable lottery outcomes constitute theft. These are security failures in the strongest sense, and they arise precisely when cryptographic guarantees are satisfied but economic incentives are misaligned.

\subsection{Contributions}

Our work makes the following contributions:

\begin{itemize}[leftmargin=*]
  \item \textbf{Rational adversarial model for VDF beacons.}
  We present the first explicit attack model for VDF-based randomness beacons that incorporates economic rationality. The model captures adversarial computation speed, hardware rental costs, opportunity cost, and dynamic decision-making.

  \item \textbf{Economic security definitions and conditions.}
  We give formal definitions of \emph{economic security} for VDF-based beacons and show that, under natural assumptions, a beacon is economically secure if and only if its delay parameters exceed a reward-to-cost threshold that we characterize analytically.

  \item \textbf{Analysis of biasing, grinding, and selective abort.}
  We extend the basic model to handle grinding capacity, selective abort leverage, and multi-round manipulation. We derive conditions that quantify how much the delay must increase when such attack surfaces exist.

  \item \textbf{Case studies with realistic parameters.}
  Using representative cloud-pricing and hardware benchmark data, along with estimates of MEV and protocol-level rewards, we evaluate the economic security of several candidate VDF beacon configurations. Our analysis shows that multiple seemingly reasonable parameter choices become unsafe once rational adversarial behavior is taken into account.

  \item \textbf{Design guidelines and ESDP abstraction.}
  We extract practical guidelines for protocol designers and introduce \emph{Economically Secure Delay Parameters} (ESDP), a simple abstraction that can be integrated into protocol specifications and parameter-tuning processes.
\end{itemize}

\subsection{Roadmap}

Section~\ref{sec:background} reviews VDF-based randomness beacons and rational cryptography. Section~\ref{sec:threat-model} presents our threat model and security goals. Section~\ref{sec:econ-model} formalizes the economic model of VDF attacks. Section~\ref{sec:theory} gives our core theorems and proofs for economic security. Section~\ref{sec:extended-attacks} extends the analysis to grinding, selective abort, and multi-round manipulation. Section~\ref{sec:evaluation} describes evaluation methodology and case studies. Section~\ref{sec:design} distills design guidelines and the ESDP abstraction. Section~\ref{sec:related} discusses related work, and Section~\ref{sec:conclusion} concludes.


\section{Background and Preliminaries}
\label{sec:background}

In this section we briefly review verifiable delay functions, VDF-based randomness beacons, and relevant notions from rational cryptography and economic analysis.

\subsection{Verifiable Delay Functions}

Informally, a verifiable delay function (VDF) is a function that \cite{boneh2018verifiable}:
\begin{enumerate}[leftmargin=*]
  \item requires a prescribed sequential running time $T$ to evaluate, even on massively parallel hardware, and
  \item admits a succinct proof that can be verified quickly.
\end{enumerate}

Formally, we consider a VDF scheme $\mathcal{V} = (\mathsf{Setup}, \mathsf{Eval}, \mathsf{Verify})$ with security parameter $\lambda$:
\begin{itemize}[leftmargin=*]
  \item $\mathsf{Setup}(1^\lambda)$  outputs public parameters $\mathsf{pp}$.
  \item $\mathsf{Eval}(\mathsf{pp}, x)$ deterministically computes $(y, \pi)$ where $y = f(x)$ and $\pi$ is a proof of correct evaluation.
  \item $\mathsf{Verify}(\mathsf{pp}, x, y, \pi)$ outputs $1$ if $(y,\pi)$ is a valid output and proof for input $x$, and $0$ otherwise.
\end{itemize}

A VDF is \emph{sequential} if any algorithm that computes $y$ from $x$ must perform at least $T$ sequential steps, where $T$ is the delay parameter. It is \emph{succinct} if verification is significantly faster than evaluation. Existing constructions instantiate $f$ using repeated squaring in groups of unknown order or iterated isogenies, among others.

In this work we treat VDF constructions as black boxes that achieve an ideal sequentiality property: honest evaluation requires time $T$, while any adversary with parallel resources cannot reduce this delay by more than a bounded factor.

\subsection{VDF-Based Randomness Beacons}

VDFs can be used to construct publicly verifiable randomness beacons \cite{choi2023sok}. We follow the standard abstraction in which each round begins with a seed derived from on-chain state or from a Verifiable Random Function (VRF).

\paragraph{Verifiable Random Functions (VRFs).}
A VRF is a pseudorandom function whose outputs are publicly verifiable.  
Given a secret key $sk$ and input $x$, the holder computes 
\[
  (y, \pi) \leftarrow \mathsf{VRF.Eval}(sk, x),
\]
and anyone can verify correctness using 
\[
  \mathsf{VRF.Verify}(x, y, \pi) = 1.
\]

A typical VDF-based beacon proceeds as follows in round~$r$:
\begin{enumerate}[leftmargin=*]
  \item Parties agree on a seed $s_r$.
  \item The beacon value is $R_r = f(s_r)$, where $f$ is a VDF with delay $T$.
  \item Any party can compute $(R_r, \pi_r) \leftarrow \mathsf{Eval}(\mathsf{pp}, s_r)$ and broadcast $(R_r, \pi_r)$.
  \item Others verify $(R_r, \pi_r)$ using $\mathsf{Verify}$.
\end{enumerate}

If all parties are constrained by the same physical limits on sequential computation, then no adversary can learn $R_r$ substantially earlier than honest parties. This prevents adversaries from adaptively influencing the seed or protocol decisions based on future randomness.

In practice, however, the delay parameter $T$ must be instantiated as a concrete real-time value (e.g., 2 seconds, 10 seconds) given a target hardware profile. Setting $T$ too small weakens security, while setting it too large degrades liveness and increases protocol latency.

Our work provides a principled way to choose $T$ based not only on cryptographic sequentiality but also on economic incentives.

\subsection{Rational Cryptography and Economic Security}

Rational cryptography studies cryptographic protocols under the assumption that parties are economically rational rather than arbitrarily malicious. Instead of demanding security against \emph{all} feasible adversaries, rational models require that \emph{no resource-bounded adversary can improve its expected utility by deviating from the prescribed protocol}. 

In this setting, the adversary is modeled as a player in a game whose utility function reflects both potential gains and incurred costs. A protocol is considered secure when honest behavior forms an equilibrium strategy, or at least when no profitable deviation exists for any rational participant.

For VDF-based randomness beacons, an adversary's utility is driven by:
\begin{enumerate}[leftmargin=*]
  \item the reward $V$ obtainable from influencing or learning the beacon output early, e.g. MEV, side bets, or protocol-level advantages;
  \item the cost $c$ of acquiring and operating the computational resources needed to attack the VDF; and
  \item the timing of the attack relative to honest evaluators and the protocol's schedule.
\end{enumerate}

Our framework captures these factors explicitly and defines economic security as the condition that no adversary in the considered class can profit by deviating from honest behavior.

\section{Threat Model and Cryptographic Security}
\label{sec:threat-model}

We now formalize the threat model and the security goals we aim to capture.

\subsection{System Model}

Time is divided into rounds $r = 1, 2, \dots$. In each round:
\begin{itemize}[leftmargin=*]
  \item A seed $s_r$ is determined via some protocol step, such as deriving from the previous beacon value, blockchain state, or a VRF output.
  \item The beacon value is $R_r = f(s_r)$, where $f$ is a VDF with delay parameter $T$.
  \item Honest parties compute $R_r$ by running $\mathsf{Eval}$, which takes real time $T$ on reference hardware.
\end{itemize}

We assume at least one honest evaluator whose execution speed sets the baseline sequential delay $T$ for the system. The specific mechanism that generates $s_r$ is not material to our analysis, we simply assume that $s_r$ remains unpredictable to the adversary until the protocol reveals it, as is standard in beacon constructions \cite{abram2024time}.

\subsection{Adversary Capabilities}

We consider an adversary $\calA$ with the following capabilities:

\begin{description}[leftmargin=*,style=nextline]
  \item[Computational advantage.]
  The adversary has a speedup factor $\delta \geq 1$ relative to the honest evaluator. That is, the time required for $\calA$ to perform the same amount of sequential work is $T/\delta$ instead of $T$. This captures specialized hardware, better engineering, or more aggressive overclocking.

  \item[Resource cost.]
  The adversary may rent computational capacity or deploy custom hardware. We model this through a per-unit-time cost $c > 0$, representing the monetary cost of sustaining one second of effective sequential computation at adversarial speed~$\delta$.

  \item[Access to rewards.]
  The adversary can extract a reward $V_r$ from influencing or learning the beacon value $R_r$ for round $r$. This reward may depend on the protocol context, MEV opportunities, and external financial positions. We treat $V_r$ as a random variable with known distribution or bounds.

  \item[Strategic behavior.]
  The adversary is economically rational and seeks to maximize expected profit. In each round, $\calA$ may decide whether to attack the VDF, how much computational effort to invest, and when to stop.
\end{description}

We assume that $\calA$ does not violate the underlying cryptographic assumptions of the VDF, for example, it cannot break the sequentiality guarantee, but it may exploit any advantage permitted by faster hardware or more resources.

\subsection{Attack Surfaces}

We consider the following adversarial capabilities:

\paragraph{Early revelation.}
The adversary attempts to evaluate the VDF faster than honest participants, enabling it to learn $R_r$ in advance and act on that information before the beacon is publicly known.

\paragraph{Biasing and grinding.}
By exploring multiple candidate seeds or protocol branches, the adversary may evaluate several potential beacon outputs and selectively reveal only those that are favorable, thereby biasing the resulting randomness.

\paragraph{Selective abort.}
If the protocol allows the party that computes the beacon to decide whether to publish it, an adversary may discard unfavorable outcomes and force re-runs until a favorable outcome appears.

\paragraph{Multi-round manipulation.}
The impact of an attack may compound across rounds. An adversary can seek incremental advantages in a sequence of beacons, influencing repeated committee selections, validator rotations, or other mechanisms that depend on long-term randomness.

\subsection{Cryptographic Security}\label{c-security}
\begin{definition}[Cryptographic VDF Security]
A VDF-based beacon is cryptographically secure if no probabilistic polynomial-time adversary can, except with negligible probability, produce a valid output $(R_r, \pi_r)$ for seed $s_r$ more than a negligible amount of time before an honest evaluator \cite{zhou2025blockchain}.
\end{definition}

This definition addresses only computational hardness and does not account for adversarial incentives or resource costs.  
We introduce our complementary notion of \emph{economic security} in Section~\ref{another-security-def}.
\section{Economic Model of VDF Attacks}
\label{sec:econ-model}

We now formalize the economic model that governs adversarial decisions. This section provides the state space, cost and reward processes, and strategy space. In Section~\ref{sec:theory} we derive structural results about optimal strategies and robust security conditions.

\subsection{Timing and State}

We focus on a single beacon round first and later extend to multiple rounds and multiple protocols. Time is treated as continuous, $t \in [0,\infty)$. For a given round $r$ (we omit $r$ when clear), we define:
\begin{itemize}[leftmargin=*]
  \item $T$: the honest evaluation time of the VDF. It is the delay parameter;
  \item $t_0$: the time at which the seed $s$ for this round is fixed and becomes known to the adversary.
  \item $t^\text{H} = t_0 + T$: the time at which an honest evaluator completes the VDF evaluation on reference hardware.
\end{itemize}

The adversary has speedup factor $\delta \ge 1$ relative to the honest evaluation: computing the same amount of sequential work takes time $T/\delta$ on adversarial hardware. We model the \emph{remaining work} at time $t$ as a state variable
\[
  S_t \in [0,T],
\]
measured in units of honest sequential time. When $S_t = s$, an honest evaluator would require time $s$ to finish the computation; the adversary, running at speed $\delta$, would require time $s/\delta$.

Given a control action $a_t \in \{0,1\}$ (compute or idle), the state dynamics are
\begin{equation}
\label{eq:state-dynamics}
  S_{t+\mathrm{d}t} = S_t - a_t \, \delta \, \mathrm{d}t,
\end{equation}
with the convention that $S_t$ is clipped at $0$ once the VDF has been fully evaluated.

Honest evaluation progresses at unit rate in the same time scale, so the honest evaluator completes at time $t^\mathrm{H} = t_0 + T$.

\subsection{Reward and Cost Processes}

We separate the \emph{economic reward} process from the cryptographic computation.

\paragraph{Reward process.}
Let $(V_t)_{t \ge t_0}$ be an adapted stochastic process that models the value available to the adversary if it succeeds in manipulating or learning the beacon early at time $t$. For example, $V_t$ may represent MEV available in the corresponding block, the financial value of forcing a particular committee selection, or the payoff of a derivative conditioned on the beacon outcome. We assume:
\begin{itemize}[leftmargin=*]
  \item $V_t \ge 0$ for all $t$,
  \item $(V_t)$ is right-continuous with left limits,
  \item the adversary observes $V_t$ as it evolves.
\end{itemize}

These regularity conditions are standard in stochastic control and ensures that reward jumps are allowed, but the process does not behave pathologically.

We denote by $V = V_\tau$ the realized reward if the adversary completes the VDF at stopping time $\tau$ and if the protocol conditions for extracting that reward are satisfied. If the adversary does not attack or fails to complete in time, the reward is $0$.

\paragraph{Cost process.}
The adversary incurs operational costs while computing, the costs may fluctuate with cloud spot prices or energy costs. Let $c(t) \ge 0$ denote the instantaneous cost rate at time~$t$, for example, the dollar cost per unit of adversarial running time. If the adversary chooses action $a_t \in \{0,1\}$ at time~$t$, the instantaneous cost is $a_t\, c(t)$, and the cumulative cost incurred up to stopping time $\tau$ is
\[
  \text{Cost}(\tau) = \int_{t_0}^{\tau} a_t \, c(t) \, \mathrm{d}t.
\]

In many cases we can take $c(t) \equiv c$ to be constant, we retain the time dependence for generality.

\subsection{Adversarial Strategies and Profit}\label{another-security-def}

An adversarial strategy $\sigma$ for a single round consists of:
\begin{itemize}[leftmargin=*]
  \item A progressively measurable process $(a_t)_{t \ge t_0}$ with $a_t \in \{0,1\}$ indicating whether the adversary computes at time $t$.
  \item A stopping time $\tau$ with respect to the filtration generated by $(V_t)$ and $(S_t)$, representing the time at which the adversary chooses to stop the attack, either because it has completed the VDF or because it decides to abandon the attack.
\end{itemize}

We assume that after $\tau$ the adversary no longer incurs costs. The attack is \emph{successful} if $S_\tau = 0$ (the VDF is fully evaluated) and $\tau < t^\text{H}$ (completion before honest revelation). Let $\mathbf{1}_\text{succ}$ denote the indicator of success.

The \emph{profit} of strategy $\sigma$ in this round is
\begin{equation}
\label{eq:profit-general}
  \profit(\sigma) = \mathbf{1}_\text{succ} \cdot V_\tau - \int_{t_0}^{\tau} a_t \, c(t) \, \mathrm{d}t.
\end{equation}
\subsection{Economic Security}
We now introduce \emph{economic security}, a notion that complements the cryptographic guarantee in Section~\ref{c-security} by incorporating adversarial incentives and resource costs.

\begin{definition}[Economic VDF Security]
\label{def:econ-security}
Fix a class of adversaries characterized by speedup $\delta$ and cost parameter $c$, and a reward process $\{V_r\}$. A VDF-based beacon is \emph{economically secure} with respect to these parameters if, for every adversary $\calA$ in the class and every attack strategy $\sigma$, the expected profit satisfies
\[
  \E[\profit_{\calA}(\sigma)] \le 0.
\]
\end{definition}

Intuitively, an economically secure beacon admits no profitable deviation from honest behavior under the specified economic environment. A rational adversary will therefore prefer not to attack.

We define the optimal value function at state $(s,v,t)$ as
\begin{equation}
\label{eq:value-function}
  J(s,v,t) = \sup_{\sigma} \;\E\big[ \profit(\sigma) \,\big|\, S_t = s, V_t = v \big].
\end{equation}
The beacon designer aims to choose $T$ such that $J(T,v,t_0) \le 0$ for all $v$ in a plausible range.

\begin{definition}[Single-Round Economic Security]
\label{def:single-round-econ}
Fix a class of adversaries characterized by speedup $\delta$, cost process $c(\cdot)$, and reward process $(V_t)_{t \ge t_0}$. A VDF-based beacon round with delay $T$ is \emph{economically secure} if, for all admissible strategies $\sigma$,
\[
  \E\big[ \profit(\sigma) \big] \le 0.
\]
Equivalently, $J(T,v,t_0) \le 0$ almost surely for the initial distribution of $V_{t_0}$.
\end{definition}

In the next section we show that, under mild regularity, the optimal strategy has a threshold structure and yields closed-form sufficient and necessary conditions on $T$.

\paragraph{Justification of the Stopping Model.}
We emphasize that the optimal-stopping formulation does not assume the adversary literally abandons a partially completed VDF evaluation. Rather, the decision to ``stop'' corresponds to the adversary's ex-ante choice of whether to initiate an attack in a given round, before committing resources. Once a VDF evaluation begins, the adversary indeed runs it to completion. The stopping framework captures the round-by-round decision: in each round, the adversary observes the reward landscape (e.g., pending MEV, stake distribution) and decides whether to invest in attacking that round's beacon. For adversaries with sunk hardware costs (e.g., purchased ASICs), the per-round cost $c$ should be interpreted as the amortized cost including depreciation and opportunity cost of capital, not solely the marginal electricity cost. Under this interpretation, even an adversary with dedicated hardware faces a meaningful per-round cost that makes the attack-or-wait decision non-trivial.

\section{Theoretical Framework: Optimal Stopping and Robust Economic Security}
\label{sec:theory}

We now derive structural results for the optimal adversarial strategy and use them to obtain robust economic security conditions under parameter uncertainty and in multi-protocol settings.

\subsection{Threshold Structure of Optimal Strategies}

We first show that adversarial strategies have a simple threshold form under natural assumptions. For clarity, we specialize to the common case of constant cost $c(t) \equiv c$ and a Markovian reward process.

\begin{assumption}[Markovian Reward and Regularity]
\label{assump:markov}
The reward process $(V_t)$ is a time-homogeneous Markov process with state space $\mathcal{V} \subseteq \R_{\ge 0}$, and $V_t$ has continuous sample paths and bounded drift and diffusion coefficients on compact sets. Moreover, the joint process $(S_t, V_t)$ is Markov with respect to the filtration observed by the adversary.
\end{assumption}

Under Assumption~\ref{assump:markov}, the value function $J$ from~\eqref{eq:value-function} satisfies a dynamic programming principle. Intuitively, on a small time interval $\mathrm{d}t$, the adversary decides whether to compute (action $a_t = 1$) or idle (action $a_t = 0$). In each case, it trades off the immediate cost against the future value.

We can write the Bellman equation informally as
\[
  J(s,v,t) = \max\Big\{
    J^\text{idle}(s,v,t),\, J^\text{comp}(s,v,t)
  \Big\},
\]
where
\begin{align*}
  J^\text{idle}(s,v,t)
    &= \E\big[ J(S_{t+\mathrm{d}t}, V_{t+\mathrm{d}t}, t+\mathrm{d}t) \,\big|\, a_t=0 \big], \\
  J^\text{comp}(s,v,t)
    &= -c\,\mathrm{d}t + \E\big[ J(S_{t+\mathrm{d}t}, V_{t+\mathrm{d}t}, t+\mathrm{d}t) \,\big|\, a_t=1 \big],
\end{align*}
with boundary condition $J(0,v,t) = v$ for $t < t^\text{H}$, since completing the VDF before honest revelation yields the reward $V_t$, and $J(s,v,t) = 0$ for $t \ge t^\text{H}$ because no reward is obtainable once the honest output has been revealed.

We show that the optimal policy is a \emph{threshold rule} in the state $(s,v,t)$.

\begin{theorem}[Threshold Optimal Policy]
\label{thm:threshold}
Under Assumption~\ref{assump:markov} and constant cost $c>0$, there exists a measurable region $\mathcal{A} \subseteq [0,T] \times \mathcal{V} \times [t_0, t^\text{H}]$ such that an optimal adversarial strategy $(a^\star_t)$ is given by the threshold policy
\[
  a^\star_t =
  \begin{cases}
    1 & \text{if } (S_t, V_t, t) \in \mathcal{A},\\[3pt]
    0 & \text{otherwise.}
  \end{cases}
\]
Moreover, $\mathcal{A}$ is monotone in $v$ in the following sense: if $(s,v,t) \in \mathcal{A}$ and $v' > v$, then $(s,v',t) \in \mathcal{A}$.
\end{theorem}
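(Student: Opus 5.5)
The plan is to treat the problem as a finite-horizon stochastic control problem on the state $(s,v,t)$ and to read off the threshold region from the dynamic programming principle. Under Assumption~\ref{assump:markov} the joint process $(S_t,V_t)$ is Markov and the horizon $t^\text{H}$ is finite. Hence the value function $J$ of~\eqref{eq:value-function} depends only on $(s,v,t)$.

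\textbf{Step 1: existence of a Markov optimal control.} Since the action set $\{0,1\}$ is finite and the horizon is bounded, it suffices to work with a time discretization of mesh $h$. On the grid the Bellman recursion is exact, and a measurable maximizer exists by selecting the larger of the two branches, with ties broken toward idling. Let $J_h$ denote the resulting discrete value and $\mathcal{A}_h=\{(s,v,t): J_h^\text{comp}>J_h^\text{idle}\}$ the discrete compute region. I will then pass $h\to 0$, using the bounded drift and diffusion coefficients and the continuity of paths from Assumption~\ref{assump:markov}, to obtain $J_h\to J$. The continuous-time region is then $\mathcal{A}=\{J^\text{comp}\ge J^\text{idle}\}$, understood through the generator: computing is optimal where $-c-\delta\,\partial_s J\ge 0$. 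Measurability of $\mathcal{A}$ follows from measurability of $J$, which is a supremum approximated by the countably many discrete-time values. The induced feedback policy $a^\star_t=\mathbf{1}\{(S_t,V_t,t)\in\mathcal{A}\}$ is optimal by a standard verification argument.

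\textbf{Step 2: monotonicity in $v$.} This is the substantive claim, and I expect it to be the main obstacle. I would prove two things by backward induction on the grid, preserved in the limit:
\begin{itemize}
\item $J(s,\cdot,t)$ is nondecreasing in $v$;
\item the advantage of computing, $D(s,v,t)=J^\text{comp}(s,v,t)-J^\text{idle}(s,v,t)$, is nondecreasing in $v$.
\end{itemize}

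Both require a monotone structure of the reward process in the initial condition. I would use a comparison/coupling result for one-dimensional diffusions: with continuous paths, two copies started at $v<v'$ can be coupled so that $V_u\le V'_u$ for all $u$, since paths cannot cross without meeting and can then be glued. Monotonicity of $J$ in $v$ then follows pathwise, because the boundary value $J(0,v,t)=v$ is increasing and the cost does not depend on $v$.

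For $D$, the key observation is that computing moves the state toward $s=0$, where the payoff is linear in $v$ with slope one. Idling leaves $s$ fixed while time runs out, and at $t^\text{H}$ the payoff is $0$ regardless of $v$. So I need the increasing differences (supermodularity) property: $J(s,v,t)$ has larger increments in $v$ for smaller $s$, i.e.\ $-\partial_s J$ is nondecreasing in $v$. I would propagate this through the recursion. The terminal and boundary data have it, since at $s=0$ the $v$-slope is $1$ and the slope elsewhere is at most $1$ by a pathwise argument. The maximum of two functions with increasing differences in $(-s,v)$ that are ordered by a single-crossing rule preserves it, and the coupled expectation preserves it as well.

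If the single-crossing preservation under the max is delicate, the fallback is a direct pathwise argument. Fix an optimal strategy from $(s,v,t)$ that computes now, and apply the same action sequence under the coupled higher path $V'$. Swapping to idle changes only the reward term, and the reward gain from completion is $V'_\tau-V_\tau\ge 0$ on success events. Comparing the success sets under compute versus idle, the compute-first policy succeeds on a superset. This yields $D(s,v',t)\ge D(s,v,t)$, so $(s,v,t)\in\mathcal{A}$ implies $(s,v',t)\in\mathcal{A}$.
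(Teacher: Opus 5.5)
Step~1 is at the same level of rigor as the paper's appeal to standard control results, and you correctly see that the real content is a single-crossing property of $D=J^{\mathrm{comp}}-J^{\mathrm{idle}}$ in $v$. The gap is that neither route in Step~2 establishes it, and the primary route tries to propagate a property that is false under Assumption~\ref{assump:markov}.

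(a) Your base case claims the $v$-slope of $J(s,\cdot,t)$ is at most $1$ for $s>0$. This fails whenever the coupled gap $V'_u-V_u$ can grow. Take geometric Brownian motion with drift rate $\kappa>0$. Idling and then completing at $t^\text{H}-\eta$ gives $J(s,v,t)\ge v e^{\kappa(t^\text{H}-t-\eta)}-cs/\delta$. Meanwhile $J(0,v,t)=v$ and $J(s,0,t)=J(0,0,t)=0$. So $v\mapsto J(s,v,t)-J(0,v,t)$ is not nonincreasing, and increasing differences in $(-s,v)$ already fails between $s=0$ and $s>0$. That is exactly the comparison your recursion uses at $s=\delta h$.

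(b) Even granting increasing differences at $t+h$, the max step does not preserve it. Write $J=g^{\mathrm{idle}}+D^+$ and take $s_1<s_2$. On the set of $v$ where computing is optimal with work $s_2$ but not with $s_1$, the term $D(s_1,\cdot)^+-D(s_2,\cdot)^+$ equals $-D(s_2,\cdot)$. This is nonincreasing in $v$ precisely because $D(s_2,\cdot)$ is nondecreasing. Deadline pressure makes $D$ grow with $s$, so this set is typically nonempty. The Topkis-type preservation you have in mind needs $D$ to be nondecreasing in $-s$ as well as in $v$, which is the opposite of the situation here.

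(c) The fallback compares unrelated strategies. Replacing the first compute step by idling saves $c\,\mathrm{d}t$ and leaves $\delta\,\mathrm{d}t$ more work. It therefore changes the completion time and success event of the whole continuation, not only the reward term. A lower bound on $D(s,v',t)-D(s,v,t)$ needs two strategies: the optimal compute-first strategy at $v$ and the optimal idle-first strategy at $v'$. Nothing nests their success sets; the latter, facing higher rewards, may well abandon less. Moreover, the gains $V'_\tau-V_\tau$ are evaluated at different completion times, and a growing gap favours the later-completing idle branch.

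You can see on your own grid that stochastic monotonicity alone cannot carry the argument. Take:
\begin{itemize}
\item $s=\delta h$, with a deadline allowing completion only at $t+h$ or $t+2h$;
\item cost $ch=1$ per compute step;
\item the stochastically monotone kernel $10\mapsto 5$, $20\mapsto 40$ (prob.~$0.9$) or $6$ (prob.~$0.1$), $5\mapsto 2.5$, $6\mapsto 3$, $40\mapsto 80$.
\end{itemize}
At $v=10$, computing yields $4$ against $1.5$ for idling. At $v'=20$, computing yields $35.6$ against $71.3$ for idling. So $\mathcal{A}_h$ is not monotone in $v$. This kernel is not the short-time law of a diffusion, so it does not refute the continuous-time statement. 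It does show that an induction using only the monotone coupling, as yours does, cannot be valid.

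The paper's proof sketch justifies monotonicity with exactly the heuristic behind your fallback: the cost trajectory is unchanged while the terminal payoff grows. That raises both $J^{\mathrm{comp}}$ and $J^{\mathrm{idle}}$ and says nothing about which rises more, so the paper does not supply the missing step either. Closing it requires one of two things: a genuine single-crossing argument for $D$ in $v$ using more than the monotone coupling, such as additional hypotheses on the reward dynamics, or a weaker monotonicity statement.
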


\begin{proof}[Proof sketch]
Under the Markov property and constant cost rate, the adversary's problem reduces to a finite-horizon Markov decision process with continuous state space and compact action set. Standard results from stochastic control and optimal stopping theory imply the existence of an optimal Markovian policy.  

Monotonicity in $v$ follows from the structure of the payoff: increasing the reward $v$ weakly increases the value of choosing to compute relative to idling, since the future cost trajectory is unchanged while the potential terminal payoff becomes larger. Thus, if computing is optimal when the reward is $v$, it continues to be optimal for all larger rewards $v' > v$.  

The acceptance region $\mathcal{A}$ is precisely the set of states where the value of computing dominates that of idling:
\[
  (s,v,t) \in \mathcal{A} 
  \quad \Longleftrightarrow \quad
  J^{\mathrm{comp}}(s,v,t) \ge J^{\mathrm{idle}}(s,v,t).
\]
\end{proof}

Theorem~\ref{thm:threshold} implies that the adversary's optimal behavior is determined by a decision boundary in the $(s,v,t)$-state space. Economic security therefore requires that, at the initial state $(S_{t_0} = T,\, V_{t_0},\, t_0)$, the optimal policy yields non-positive expected value:
\[
  J(T, V_{t_0}, t_0) \le 0 \quad \text{almost surely}.
\]

While evaluating this condition exactly can be challenging in full generality, many practical settings admit additional structure that leads to explicit and interpretable criteria.

\subsection{Recovering a Simple Linear Condition}

To build intuition and connect to the simpler analysis, consider the following specialization:

\begin{assumption}[Simplified Model]
\label{assump:simplified}
(i) The adversary either commits to full evaluation or does not attack at all, partial evaluations have no value. (ii) If the adversary completes before $t^\text{H}$, the attack succeeds with probability 1. (iii) The reward process is constant in time, $V_t \equiv V$, and bounded.
\end{assumption}

Under Assumption~\ref{assump:simplified}, the state variables $s$ and $t$ are irrelevant beyond feasibility, and the adversary's decision reduces to a binary choice. In this case, the threshold region $\mathcal{A}$ of Theorem~\ref{thm:threshold} collapses to a simple condition on the expected reward.

\begin{corollary}[Linear Threshold Condition]
\label{cor:linear}
Under Assumption~\ref{assump:simplified}, a VDF-based beacon round with delay $T$ is economically secure if and only if
\begin{equation}
\label{eq:linear_condition}
  T \;\ge\; \frac{\delta}{c} \, \E[V].
\end{equation}
\end{corollary}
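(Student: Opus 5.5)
Since Assumption~\ref{assump:simplified} removes all dynamics, the plan is to compute the value $J(T,V,t_0)$ in closed form instead of solving the Bellman equation, by reducing the strategy space of Definition~\ref{def:single-round-econ} to two candidates. The first candidate is \emph{abstain}: $a_t\equiv 0$ and $\tau=t_0$, with profit $0$. The second is \emph{full attack}: $a_t=1$ from $t_0$ until $S_t=0$. By~\eqref{eq:state-dynamics} the full attack finishes at $\tau=t_0+T/\delta$ and costs $\int_{t_0}^{\tau} c\,\mathrm{d}t = cT/\delta$. For $\delta>1$ we have $\tau<t^\text{H}$, so by (ii) it succeeds and by (iii) it collects $V$. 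Its expected profit is therefore $\E[V]-cT/\delta$.

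The main step, which I expect to be the real obstacle, is to show that no admissible $\sigma$ beats the better of these two, i.e.\ that
\[
\sup_\sigma \E[\profit(\sigma)]=\max\{0,\E[V]-cT/\delta\}.
\]
I would fix an arbitrary $\sigma=((a_t),\tau)$ and split on the event $\{S_\tau=0,\ \tau<t^\text{H}\}$.
\begin{itemize}[leftmargin=*]
\item On this event, the total computing time satisfies $\int_{t_0}^\tau a_t\,\mathrm{d}t = T/\delta$, because the state must decrease by $T$ at rate $\delta$. The cost is therefore exactly $cT/\delta$. Since $V_t\equiv V$, the reward is $V$ no matter when completion happens, so idling or delaying gains nothing.
\item Off this event, the reward is $0$ and the cost is $\ge 0$. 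Clause (i) says partial work has no value.
\end{itemize}
Writing $p=\Prob(\text{succ})$, and noting that $V$ is constant in time so the success decision can depend on $V$ only through its initial realization, I would obtain
\[
\E[\profit(\sigma)] \le \E\big[\mathbf{1}_\text{succ}(V-cT/\delta)\big] \le \E\big[(V-cT/\delta)^+\big].
\]
This is where some care is needed. If $V$ is observed at $t_0$, the sharp bound is $\E[(V-cT/\delta)^+]$, not $\max\{0,\E[V]-cT/\delta\}$. I would resolve this using the paper's interpretation of (i) as an ex-ante commit-or-abstain decision, made before $V$ is realized (the ``Justification of the Stopping Model'' paragraph). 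Under that interpretation $\mathbf{1}_\text{succ}$ is independent of $V$, and the bound collapses to $p(\E[V]-cT/\delta)\le\max\{0,\E[V]-cT/\delta\}$. If instead $V$ is known at $t_0$, then $\E$ is degenerate and the two expressions coincide anyway.

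With the value identified, the conclusion is immediate. Economic security ($\sup\le 0$) holds iff $\E[V]-cT/\delta\le 0$, i.e.\ iff $T\ge \frac{\delta}{c}\E[V]$. This is~\eqref{eq:linear_condition}, with equality counting as secure because it gives zero profit. Boundedness of $V$ in (iii) guarantees that all expectations are finite.

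I would also flag the degenerate case $\delta=1$. There completion occurs at $t^\text{H}$ itself, so no attack succeeds and the beacon is trivially secure. The ``only if'' direction then needs either $\delta>1$ or the convention that a tie with the honest evaluator counts as success. I would state this explicitly as part of the hypothesis.
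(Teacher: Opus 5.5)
Your full-commitment computation (completion at $t_0+T/\delta$, cost $cT/\delta$, expected profit $\E[V]-cT/\delta$) is the paper's entire proof. The paper reads clause (i) of Assumption~\ref{assump:simplified} as restricting the adversary to two non-adaptive options, commit or abstain. Security is then simply the sign of $\E[V]-cT/\delta$. Both of the paper's sentences in fact argue the only-if direction, so the if-direction holds by fiat through that reading of (i). You instead keep a general admissible $\sigma$ and bound $\E[\profit(\sigma)]\le\E[\mathbf{1}_\text{succ}(V-cT/\delta)]$. This is longer but more general, and it exposes hypotheses the paper hides. The $\E[V]$ threshold requires the attack decision to be independent of $V$. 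The only-if direction requires $\delta>1$, because success needs $\tau<t^\text{H}$ strictly; the paper's ``by assumption, obtains reward $V$'' silently assumes this. Two corrections, though. First, the ``Justification of the Stopping Model'' paragraph does not give you independence. It says the adversary \emph{observes} the reward landscape before deciding, and Section~\ref{sec:econ-model} lets it observe $V_t$. Independence must therefore be an explicit added hypothesis. Without it the if-direction is false: take $\delta>1$, $\kappa=cT/\delta>0$ and $V\in\{0,2\kappa\}$ equiprobable. Then \eqref{eq:linear_condition} holds with equality, yet attacking only when $V=2\kappa$ earns $\kappa/2>0$. Second, your remark that for observed $V$ ``the two expressions coincide anyway'' holds only for deterministic $V$, or conditionally on $V_{t_0}=v$. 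For a random but observed $V$, the correct criterion is $V\le cT/\delta$ almost surely, i.e.\ $T\ge\frac{\delta}{c}\,\mathrm{ess\,sup}\,V$ as in Theorem~\ref{thm:robust-interval}, not~\eqref{eq:linear_condition}. (Minor: on the success event the cost is at least, not exactly, $cT/\delta$, since computing after $S_t$ hits $0$ is wasted; only the inequality is needed.)
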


\begin{proof}
If the adversary commits to full evaluation, it incurs cost $c T/\delta$ and, by assumption, obtains reward $V$ with probability 1. The expected profit is $\E[V] - cT/\delta$. Economic security requires that this be non-positive, which yields~\eqref{eq:linear_condition}. Conversely, if~\eqref{eq:linear_condition} fails, attacking yields strictly positive expected profit, which contradicts economic security.
\end{proof}

Corollary~\ref{cor:linear} demonstrates that the linear threshold condition follows immediately as a specialization of the general optimal stopping formulation.

\subsection{Robust Economic Security Under Parameter Uncertainty}

In practice, the beacon designer does not know the true values of $(\delta, c, V)$ exactly. Instead, only ranges or statistical characteristics are available \cite{yan2025data}. We now derive robust conditions that guarantee economic security across a set of plausible parameters.

Let $\Theta$ denote a set of parameter vectors $\theta = (\delta, c, \mathcal{D}_V)$, where $\mathcal{D}_V$ is a distribution, or family of distributions for the reward $V$. We define:

\begin{definition}[Robust Economic Security]
\label{def:robust-econ}
A delay parameter $T$ is \emph{robustly economically secure} with respect to $\Theta$ if for every $\theta \in \Theta$ and every adversarial strategy $\sigma$ admissible under $\theta$,
\[
  \E_\theta\big[ \profit(\sigma) \big] \le 0.
\]
\end{definition}

Even in the simplified setting of Corollary~\ref{cor:linear}, we can derive closed-form bounds that remain valid under a broad range of conditions.

\begin{theorem}[Interval-Robust Bound]
\label{thm:robust-interval}
Suppose that for all $\theta \in \Theta$ we have bounds
\[
  \delta \in [\delta_{\min}, \delta_{\max}], \quad
  c \in [c_{\min}, c_{\max}], \quad
  V \in [0, V_{\max}] \;\text{almost surely}.
\]
Then any delay
\begin{equation}
\label{eq:robust-interval}
  T \;\ge\; \frac{\delta_{\max}}{c_{\min}} \, V_{\max}
\end{equation}
is robustly economically secure with respect to $\Theta$.
\end{theorem}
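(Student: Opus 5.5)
The plan is to work in the simplified setting of Corollary~\ref{cor:linear}, as the sentence preceding the theorem indicates. There the adversary's only nontrivial strategy is full evaluation, with profit $V - cT/\delta$; the alternative is not attacking, with profit $0$. The idea is to reduce robust security to a pointwise application of Corollary~\ref{cor:linear} for each fixed $\theta \in \Theta$. Then we show that the single worst-case inequality~\eqref{eq:robust-interval} implies the corollary's condition $T \ge (\delta/c)\,\E_\theta[V]$ uniformly over $\Theta$.

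First, fix an arbitrary $\theta = (\delta, c, \mathcal{D}_V) \in \Theta$ and an admissible strategy $\sigma$. If $\sigma$ does not attack, its profit is identically $0$. If $\sigma$ commits to full evaluation, its expected profit is $\E_\theta[V] - cT/\delta$. Second, bound the two factors separately:
\begin{itemize}
\item since $V \in [0,V_{\max}]$ almost surely, $\E_\theta[V] \le V_{\max}$;
\item since $c \ge c_{\min} > 0$ and $\delta \le \delta_{\max}$, we have $c/\delta \ge c_{\min}/\delta_{\max}$, hence $cT/\delta \ge c_{\min}T/\delta_{\max}$.
\end{itemize}
Third, combine these bounds with~\eqref{eq:robust-interval}:
\[
\E_\theta[\profit(\sigma)] \le V_{\max} - \frac{c_{\min}}{\delta_{\max}}\,T \le 0.
\]
Since $\theta$ and $\sigma$ were arbitrary, Definition~\ref{def:robust-econ} is satisfied.

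The main obstacle is making sure the class of strategies really is exhausted by ``attack fully'' versus ``do not attack.'' We would argue this directly from Assumption~\ref{assump:simplified}(i): partial evaluations earn no reward but still incur nonnegative cost, so any strategy that stops early has profit at most $0$.

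To be safe beyond the strict simplified model, we would also give a pathwise bound for a general strategy under~\eqref{eq:profit-general}. Success requires $S_\tau = 0$, which forces at least $T/\delta$ units of computing time. With constant $c$, the cost on the success event is therefore at least $cT/\delta$, while the reward is at most $V_{\max}$. On the failure event the reward is $0$ and the cost is nonnegative. So the profit is pathwise at most $\max\{0,\, V_{\max} - cT/\delta\} = 0$, and hence its expectation is at most $0$. This pathwise route also covers time-varying $V_t \le V_{\max}$, provided $c(t) \ge c_{\min}$. We would mention this as a remark rather than rely on it.
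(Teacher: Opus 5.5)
Your argument is correct and is essentially the paper's proof: fix $\theta$, use $\E_\theta[V] \le V_{\max}$ and $cT/\delta \ge c_{\min}T/\delta_{\max}$, and conclude that the expected profit is at most $V_{\max} - c_{\min}T/\delta_{\max} \le 0$. Keep your pathwise remark in the main argument rather than as an aside. It also covers strategies that attack only when the observed $V$ is large, whose expected profit is $\E_\theta[(V - cT/\delta)\mathbf{1}_{\text{attack}}]$ rather than $\E_\theta[V] - cT/\delta$; the paper's proof silently skips this case, but your bound $V \le V_{\max} \le cT/\delta$ handles it automatically.
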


\begin{proof}
Under the simplified model, the worst-case expected profit for an adversary under parameters $\theta$ is $\E_\theta[V] - cT/\delta$. Using $V \le V_{\max}$ almost surely, we have $\E_\theta[V] \le V_{\max}$. For any $\theta \in \Theta$,
\[
  \E_\theta[V] - \frac{cT}{\delta}
  \;\le\; V_{\max} - \frac{c_{\min} T}{\delta_{\max}}.
\]
If $T$ satisfies~\eqref{eq:robust-interval}, the right-hand side is $\le 0$, hence the profit is non-positive for all $\theta \in \Theta$.
\end{proof}

Theorem~\ref{thm:robust-interval} provides a simple but conservative design rule. More refined bounds are possible when only $\E[V]$ and $\Var[V]$ are known.

\begin{theorem}[$\epsilon$-Robust Design with Moment Bounds]
\label{thm:epsilon-robust}
Suppose that for all $\theta \in \Theta$,
\[
  \E[V] \le \mu_{\max}, \qquad \Var[V] \le \sigma^2_{\max}.
\]
Fix $\epsilon > 0$. If
\begin{equation}
\label{eq:epsilon-robust}
  T \;\ge\; \frac{\delta_{\max}}{c_{\min}}\left( \mu_{\max} + \frac{\sigma_{\max}}{\sqrt{\epsilon}} \right),
\end{equation}
then for every $\theta \in \Theta$, every strategy $\sigma$, and every round,
\[
  \Prob_\theta\big[ \profit(\sigma) > 0 \big] \;\le\; \epsilon.
\]
\end{theorem}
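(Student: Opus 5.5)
The plan is to work inside the simplified model of Assumption~\ref{assump:simplified}, as in Theorem~\ref{thm:robust-interval}. There, every strategy $\sigma$ either does not attack, so its profit is identically $0$ and trivially not positive, or commits to full evaluation. In the latter case the realized profit is $V - cT/\delta$, where $V$ is the (now random) reward drawn from $\mathcal{D}_V$ under $\theta$. So for any $\sigma$ and $\theta$,
\[
  \Prob_\theta\big[\profit(\sigma) > 0\big] \;\le\; \Prob_\theta\big[ V > cT/\delta \big].
\]
The claim therefore reduces to a tail bound on $V$ at the deterministic threshold $x_\theta := cT/\delta$. Note that $V$ here must be read as a random reward, not a constant as in item (iii) of the assumption. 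I would state explicitly that we use the model's profit formula with $V\sim\mathcal{D}_V$.

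Next, I would lower-bound the threshold uniformly over $\Theta$. Since $\delta\le\delta_{\max}$ and $c\ge c_{\min}$, we have $x_\theta \ge c_{\min}T/\delta_{\max}$. Hypothesis~\eqref{eq:epsilon-robust} then gives
\[
  x_\theta \;\ge\; \mu_{\max} + \frac{\sigma_{\max}}{\sqrt{\epsilon}} .
\]
Because the event $\{V > x\}$ shrinks as $x$ grows, it suffices to bound $\Prob_\theta[V > \mu_{\max} + \sigma_{\max}/\sqrt{\epsilon}]$.

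The final step is Chebyshev's inequality. Write $\mu_\theta=\E_\theta[V]\le\mu_{\max}$ and $\sigma_\theta^2=\Var_\theta[V]\le\sigma^2_{\max}$. Then $V > \mu_{\max}+\sigma_{\max}/\sqrt{\epsilon}$ implies $V-\mu_\theta > \sigma_{\max}/\sqrt\epsilon$, and so
\[
  \Prob_\theta\big[V-\mu_\theta > \sigma_{\max}/\sqrt{\epsilon}\big] \;\le\; \frac{\sigma_\theta^2\,\epsilon}{\sigma^2_{\max}} \;\le\; \epsilon .
\]
If $\sigma_{\max}=0$, then $V=\mu_\theta\le\mu_{\max}$ almost surely and the probability is $0$. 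One could sharpen this using the one-sided Cantelli bound, but that is not needed.

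The main obstacle is not the calculation but the modeling step in the first paragraph: justifying that ``every strategy'' collapses to attack versus no-attack with profit $V-cT/\delta$. I would lean on Assumption~\ref{assump:simplified}(i)--(ii), exactly as in Corollary~\ref{cor:linear} and Theorem~\ref{thm:robust-interval}. I would also note that a strategy whose attack decision depends on an observed realization of $V$ still has profit positive only on $\{V>cT/\delta\}$, so the bound holds for adaptive strategies too.
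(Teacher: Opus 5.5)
Your proposal is correct and follows the paper's proof: work in the simplified model, note that positive profit forces $V > cT/\delta \ge c_{\min}T/\delta_{\max} \ge \mu_{\max}+\sigma_{\max}/\sqrt{\epsilon}$, and apply Chebyshev with the moment bounds. Your bookkeeping is a little more careful than the paper's sketch in two respects: you use the event inclusion $\{\profit(\sigma)>0\}\subseteq\{V>cT/\delta\}$ instead of the literal claim $\profit(\sigma)\le V-cT/\delta$, which fails for non-attacking strategies, and you treat the case $\sigma_{\max}=0$. One small point: item (iii) of Assumption~\ref{assump:simplified} only says $V$ is constant in time, so $V$ may already be random there and no reinterpretation is needed.
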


\begin{proof}[Proof sketch]
Under the simplified model, $\profit(\sigma)$ is bounded by $V - cT/\delta$. Apply Chebyshev's inequality with the given moment bounds to obtain
\[
  \Prob\Big[ V - \frac{cT}{\delta} > 0 \Big]
  = \Prob\Big[ V - \E[V] > \frac{cT}{\delta} - \E[V] \Big]
  \;\le\; \frac{\Var[V]}{\big( \frac{cT}{\delta} - \E[V] \big)^2}.
\]
Imposing~\eqref{eq:epsilon-robust} ensures that the denominator is at least $(\sigma_{\max}/\sqrt{\epsilon})^2 = \sigma_{\max}^2/\epsilon$, yielding the desired bound.
\end{proof}

Theorem~\ref{thm:epsilon-robust} shows how to trade off delay $T$ against a tolerated probability $\epsilon$ that a given attack attempt yields positive profit, given only moment information about the reward.

\subsection{Multi-Protocol Composition}

In many deployments, a single beacon round supplies randomness to several higher-level protocols \cite{cascudo2023mt}, such as committee sampling, leader election, or lottery mechanisms . Each of these protocols contributes its own reward component to the adversary. We formalize this setting and derive a compositional security bound.

Assume there are $m$ protocols $\Pi_1,\dots,\Pi_m$ that use the same beacon output for a given round. Protocol $\Pi_j$ induces a reward $V^{(j)}$ for the adversary, which may be zero if that protocol does not create any economically meaningful opportunity. Let $\mu_j = \E[V^{(j)}]$ denote the expected reward contribution of $\Pi_j$.

\begin{theorem}[Compositional Single-Round Bound]
\label{thm:composition}
Under the simplified model and for a single round, suppose an adversary can coordinate attacks across all $m$ protocols. If
\begin{equation}
\label{eq:comp-bound}
  T \;\ge\; \frac{\delta}{c} \sum_{j=1}^{m} \mu_j,
\end{equation}
then the round is economically secure against any such coordinated attack.
\end{theorem}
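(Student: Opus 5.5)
The plan is to reduce the coordinated multi-protocol attack to the single-reward setting of Corollary~\ref{cor:linear}. The key observation is that all $m$ protocols consume the \emph{same} beacon output $R_r$. An adversary coordinating across them therefore still has to evaluate the VDF only once. Having done so, it can exploit the early knowledge of $R_r$ in every $\Pi_j$ at once. I will model the coordinated attack as a single attack whose reward is the aggregate
\[
  V^{\mathrm{tot}} \;=\; \sum_{j=1}^{m} V^{(j)},
\]
and whose cost is that of one full evaluation, namely $cT/\delta$.

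The steps, in order, are as follows.

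First, I fix an arbitrary coordinated strategy $\sigma$. Under Assumption~\ref{assump:simplified}(i), partial evaluations are worthless, so $\sigma$ either does not attack, with profit $0$, or commits to full evaluation. In the latter case it incurs cost exactly $cT/\delta$: running idle periods or duplicate evaluations only adds cost, since the seed is common to all protocols.

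Second, I bound the reward collected upon success. Each protocol pays at most $V^{(j)}$, so the reward is at most $V^{\mathrm{tot}}$. The random variable $V^{\mathrm{tot}}$ is nonnegative and bounded, and it is constant in time, being a finite sum of processes satisfying Assumption~\ref{assump:simplified}(iii). Together with success with probability $1$, this gives
\[
  \E[\profit(\sigma)] \;\le\; \E[V^{\mathrm{tot}}] - \frac{cT}{\delta} \;=\; \sum_{j=1}^{m}\mu_j - \frac{cT}{\delta}.
\]
The equality uses linearity of expectation, which requires no independence among the $V^{(j)}$. That matters here, because coordinated rewards may well be correlated.

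Third, under \eqref{eq:comp-bound} the right-hand side is $\le 0$, and the no-attack option gives $0$. Hence every strategy has nonpositive expected profit, which is economic security in the sense of Definition~\ref{def:single-round-econ}. Equivalently, the conclusion follows by applying the sufficiency direction of Corollary~\ref{cor:linear} with $V := V^{\mathrm{tot}}$.

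The main obstacle is the first step: justifying that coordination cannot reduce cost below a single evaluation, nor raise reward above the sum. The cost side follows from the common seed together with the sequentiality assumption, since one cannot obtain $R_r$ more cheaply than by one accelerated evaluation. The reward side is simply the modeling convention that $V^{(j)}$ already denotes the full extractable value from $\Pi_j$, including any cross-protocol synergies, which must be attributed to some $V^{(j)}$. I will state both points explicitly as consequences of the model rather than prove them from deeper principles.
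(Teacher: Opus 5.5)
Your proposal matches the paper's proof: the paper sets $V^{\mathrm{tot}}=\sum_j V^{(j)}$, notes $\E[V^{\mathrm{tot}}]=\sum_j\mu_j$, and applies Corollary~\ref{cor:linear}, and your remarks that one shared evaluation costs $cT/\delta$ and that linearity needs no independence among the $V^{(j)}$ faithfully elaborate that line. One caveat, inherited from the paper's proof of Corollary~\ref{cor:linear}: your first-step dichotomy tacitly assumes the attack decision is not conditioned on the realized $V^{\mathrm{tot}}$ (which the model lets the adversary observe at $t_0$), since attacking only on $\{V^{\mathrm{tot}}>cT/\delta\}$ yields $\E[(V^{\mathrm{tot}}-cT/\delta)^{+}]>0$ whenever $V^{\mathrm{tot}}$ exceeds $cT/\delta$ with positive probability, even if \eqref{eq:comp-bound} holds, so you should state explicitly that the choice is made ex ante.
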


\begin{proof}
The total reward in the round is $V^\text{tot} = \sum_{j=1}^{m} V^{(j)}$ with expectation $\sum_j \mu_j$. The linear threshold condition (Corollary~\ref{cor:linear}) applied to $V^\text{tot}$ yields the bound~\eqref{eq:comp-bound}.
\end{proof}

If the adversary is constrained to attack at most $k$ protocols in a round, a tighter bound replaces the sum in~\eqref{eq:comp-bound} with a maximum over subsets of size $k$:
\[
  T \;\ge\; \frac{\delta}{c} \max_{S \subseteq [m],\, |S|\le k} \sum_{j\in S} \mu_j.
\]

These results highlight that reusing the same randomness for multiple economically meaningful tasks requires summing their incentive effects when choosing $T$.

\subsection{Multiple Rounds and Cumulative Rewards}

Finally, consider a horizon of $n$ rounds, with total profit
\[
  \profit_\text{tot} = \sum_{r=1}^{n} \profit_r,
\]
where $\profit_r$ is the profit in round $r$ (defined as in~\eqref{eq:profit-general}). Let $V_r$ denote the total reward in round $r$, with $\mu_r = \E[V_r]$.

\begin{theorem}[Cumulative Economic Security]
\label{thm:multiround}
Under the simplified model, if the per-round delay $T$ satisfies
\begin{equation}
\label{eq:multiround}
  T \;\ge\; \frac{\delta}{c} \max_{1 \le k \le n} \frac{1}{k} \E\Big[ \sum_{r=1}^{k} V_r \Big],
\end{equation}
then for any strategy that attacks in any subset of rounds, the expected cumulative profit satisfies $\E[\profit_\text{tot}] \le 0$.
\end{theorem}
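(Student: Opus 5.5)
The plan is to reduce the multi-round claim to the single-round analysis of Corollary~\ref{cor:linear}, using linearity of expectation. Under Assumption~\ref{assump:simplified}, in each round the adversary either does not attack, with profit $0$, or commits to a full evaluation, with profit $V_r - cT/\delta$. A strategy that attacks a subset of rounds is therefore described by indicators $\mathbf{1}_{A_r}$, where $A_r$ is the event that round $r$ is attacked. This gives
\[
  \profit_\text{tot} = \sum_{r=1}^{n} \mathbf{1}_{A_r}\Big(V_r - \frac{cT}{\delta}\Big).
\]
The first step is to fix the information structure so that this decomposition is legitimate. I will follow the paper's justification of the stopping model: the attack decision for round $r$ is taken ex ante, so $A_r$ is measurable with respect to the information $\mathcal{F}_{r-1}$ available before round $r$'s reward is realized. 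By the tower property,
\[
  \E[\profit_\text{tot}] = \sum_{r} \E\Big[\mathbf{1}_{A_r}\Big(\E[V_r\mid\mathcal{F}_{r-1}] - \frac{cT}{\delta}\Big)\Big].
\]
To keep the argument within what the paper's expectations-based conditions control, I will assume, as in Corollary~\ref{cor:linear}, that $V_r$ is independent of $\mathcal{F}_{r-1}$. This reduces the conditional means to $\mu_r$.

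The second step is to show that each summand is non-positive. It suffices to have $\mu_r \le cT/\delta$ for every $r$, since then each term is the expectation of a non-negative indicator times a non-positive constant. The natural route from~\eqref{eq:multiround} is the $k=1$ term of the maximum, which gives $\mu_1 \le cT/\delta$. To pass to every later round, I will add a hypothesis under which the first prefix average dominates every individual mean. Two candidates are that the $\mu_r$ are non-increasing in $r$, or that the rewards are identically distributed, so that $\mu_r \equiv \mu$. In either case $\max_r \mu_r = \mu_1 \le \max_k \frac1k\sum_{r\le k}\mu_r \le cT/\delta$, and summing over rounds finishes the proof.

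I expect the main obstacle to be exactly this step, because the hypothesis as literally stated controls only prefix averages, while an arbitrary subset of rounds can isolate a single late round with a large mean. For example, take $\mu_1 = 0$ and $\mu_2 = 2cT/\delta$, with $n=2$. Then the maximum in~\eqref{eq:multiround} equals $\max\{0, cT/\delta\} = cT/\delta$, so the condition holds, yet attacking only round $2$ earns expected profit $cT/\delta > 0$.

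I would therefore handle this step in one of two ways. The first is to state the needed regularity explicitly: either non-increasing or stationary means, together with the ex-ante, independent-reward structure above. The second, which I prefer, is to strengthen the right-hand side of~\eqref{eq:multiround} to $\frac{\delta}{c}\max_{1\le r\le n}\mu_r$, which is equivalent to the stated condition under either regularity hypothesis. I would also remark that if the adversary could observe $V_r$ before deciding, per-round expectations would no longer suffice: the adversary would attack exactly when $V_r > cT/\delta$, so security would require $V_r \le cT/\delta$ almost surely, which is the interval bound of Theorem~\ref{thm:robust-interval}.
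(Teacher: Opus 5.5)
Your counterexample is correct, so the statement as written cannot be proved, and you are right not to try. The paper's own proof sketch makes the same reduction you do. It fixes a set $S\subseteq\{1,\dots,n\}$ of attacked rounds with $|S|=k$ and observes that security requires $\E[\sum_{r\in S}V_r]\le kcT/\delta$ for every such $S$. It then asserts that this is implied by \eqref{eq:multiround}. That implication holds only for prefixes $S=\{1,\dots,k\}$. Over arbitrary $S$ of size $k$, the largest value of $\frac{1}{k}\sum_{r\in S}\mu_r$ is the average of the $k$ largest means. This average is non-increasing in $k$ and equals $\max_r\mu_r$ at $k=1$.

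Hence your strengthened condition $T\ge(\delta/c)\max_r\mu_r$ is the correct one. It is also necessary: if $\mu_{r^\ast}>cT/\delta$ for some $r^\ast$, attacking round $r^\ast$ alone is profitable in expectation. Every prefix average is at most $\max_r\mu_r$, and some prefix average attains it exactly when $\mu_1=\max_r\mu_r$. So the two conditions coincide precisely in that case, which both of your regularity hypotheses guarantee. For the same reason, the paper's follow-up remark is beside the point. It says that correlated or front-loaded rewards may push the maximum to an intermediate $k$ and so require larger $T$, but no prefix average ever exceeds the single-round requirement $\max_r\mu_r$.

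Where you go beyond the paper is the strategy class. The paper implicitly treats $S$ as fixed in advance. In that case linearity of expectation gives $\E[\profit_\text{tot}]=\sum_{r\in S}(\mu_r-cT/\delta)$ with no independence assumption at all. You let the decision for round $r$ depend on $\mathcal{F}_{r-1}$, which is the more faithful reading of ``any strategy,'' and you correctly pay for it with an extra hypothesis. That hypothesis is essential, not cosmetic. Suppose $V_2=V_1$ takes the values $0$ and $2cT/\delta$ with probability $1/2$ each, and $V_1$ is observed before round~2. Then $\max_r\mu_r=cT/\delta$, so your strengthened condition holds, yet attacking round~2 only after a high $V_1$ earns $cT/(2\delta)>0$.

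So without independence the right sufficient condition is $\E[V_r\mid\mathcal{F}_{r-1}]\le cT/\delta$ almost surely. The paper explicitly contemplates correlated rewards, so this matters. It is in the same spirit as your closing remark that a reward visible before the commit decision forces an almost-sure bound of the type in Theorem~\ref{thm:robust-interval}.
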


\begin{proof}[Proof sketch]
For any set of attacked rounds $S \subseteq \{1,\dots,n\}$ of size $k$, the total cost is $k cT/\delta$, while the total reward is $\sum_{r\in S} V_r$. Economic security requires $\E[ \sum_{r\in S} V_r ] \le k cT/\delta$ for all $S$, which is implied by~\eqref{eq:multiround}.
\end{proof}

In the common case where $(V_r)$ are identically distributed and independent, $\E[\sum_{r=1}^{k} V_r] = k \mu$, and~\eqref{eq:multiround} reduces to the single-round condition $T \ge (\delta/c)\mu$. When rewards are correlated or front-loaded, the maximum may occur at intermediate $k$, requiring larger $T$.

\section{Extended Attacks: Grinding, Abort, and Multi-Adversary Games}
\label{sec:extended-attacks}

We now enrich the model along two additional axes: (1) adversarial grinding and selective abort, and (2) multiple competing adversaries. These extensions illustrate how the previous results generalize and how equilibrium behavior can sustain attacks even when individual profits appear small.

\subsection{Grinding Revisited}

As discussed earlier, grinding allows an adversary to explore multiple input seeds $s_1,\dots,s_G$ or protocol branches. We now couple grinding with the dynamic model.

Suppose evaluating the VDF on a single seed $s_i$ requires delay $T$ for an honest evaluator and $T/\delta$ for the adversary. If the adversary chooses to evaluate $G$ candidate seeds in parallel, the remaining-work state becomes a vector
\[
  S_t = \bigl(S_t^{(1)}, \dots, S_t^{(G)}\bigr),
\]
where $S_t^{(i)}$ denotes the remaining honest-time work on seed $s_i$. Running all $G$ evaluations in parallel requires provisioning $G$ independent computation streams, yielding a total instantaneous cost rate of \(G c\).

If instead the adversary evaluates the $G$ seeds sequentially using a single computation stream, the time required to explore all candidates scales by a factor of \(G\). This increases the likelihood that the adversary fails to finish before the honest deadline, reducing the effectiveness of a grinding attack.

Let $V^{(1)},\dots,V^{(G)}$ denote the rewards associated with the corresponding outputs, and let $V_\text{max} = \max_i V^{(i)}$. Ignoring time constraints for a moment, the optimal strategy is to compute all $G$ seeds and select the best output. In practice, the adversary may truncate to fewer seeds due to the deadline.

In regimes where $G$ is moderate and deadlines are loose enough that all $G$ evaluations fit before $t^\text{H}$, the linear threshold condition applied to $V_\text{max}$ yields:

\begin{theorem}[Grinding-Resistant Threshold]
\label{thm:grinding-extended}
Under the simplified model with grinding size $G$ and parallel evaluation of all $G$ seeds, economic security requires
\begin{equation}
\label{eq:grinding-extended}
  T \;\ge\; \frac{\delta}{cG} \, \E[V_\text{max}],
\end{equation}
where $V_\text{max} = \max_{1\le i \le G} V^{(i)}$.
\end{theorem}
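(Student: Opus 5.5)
The plan is to reduce the grinding attack to the binary commit-or-abstain decision of Corollary~\ref{cor:linear}, with the single-seed reward $V$ replaced by $V_\text{max}$ and the cost rate $c$ replaced by $Gc$. Under Assumption~\ref{assump:simplified}, partial evaluations are worthless and completion before $t^\text{H}$ guarantees success. The only two relevant strategies are therefore to abstain, with profit $0$, or to launch all $G$ parallel streams and run each to completion. The state vector $S_t=(S_t^{(1)},\dots,S_t^{(G)})$ and the time $t$ matter only for feasibility. Since we are in the regime where all $G$ evaluations fit before the deadline, feasibility holds automatically: every stream finishes at $t_0+T/\delta<t^\text{H}$, because $\delta\ge1$ (and $\delta>1$ gives strictness).

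First I will compute the cost of the parallel grinding strategy. Each of the $G$ streams runs for adversarial time $T/\delta$, so the aggregate cost rate is $Gc$. By \eqref{eq:profit-general} with $a_t=1$ on $[t_0,t_0+T/\delta]$, the total cost is $GcT/\delta$. Second, I will identify the reward. After all $G$ outputs are known, the adversary selects and acts on the best one, so the realized reward is $V_\text{max}=\max_i V^{(i)}$. Taking expectations, the grinding profit is
\[
\E[\profit] = \E[V_\text{max}] - \frac{GcT}{\delta}.
\]
Third, I will apply the argument of Corollary~\ref{cor:linear} verbatim. Economic security in the sense of Definition~\ref{def:single-round-econ} requires this quantity to be $\le 0$ for the grinding strategy, and rearranging gives exactly \eqref{eq:grinding-extended}. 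Conversely, if \eqref{eq:grinding-extended} fails, the grinding strategy has strictly positive expected profit, which exhibits a profitable deviation. This proves the stated necessity, and it also gives sufficiency within the class of full-$G$ parallel grinding strategies.

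The main obstacle is making precise that ``requires'' refers to the full-$G$ parallel strategy, rather than claiming the condition is sufficient against all grinding strategies. An adversary could instead grind only $g<G$ seeds, with profit $\E[\max_{i\le g}V^{(i)}]-gcT/\delta$. I would handle this with a remark: the fully robust condition replaces $\E[V_\text{max}]/G$ by $\max_{1\le g\le G}\E[\max_{i\le g}V^{(i)}]/g$, in direct analogy with the subset-maximum refinement after Theorem~\ref{thm:composition}. That maximum is generally attained at small $g$ (at $g=1$ it reduces to $\E[V]/1$), so the displayed bound is a necessary condition, not a sufficient one. I would also state explicitly that the sequential-grinding alternative is excluded by hypothesis, since it multiplies the running time by $G$ and may violate the deadline.
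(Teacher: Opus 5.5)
Your proposal is correct and follows the paper's argument. The paper justifies the theorem only by applying the linear threshold of Corollary~\ref{cor:linear} to the reward $V_\text{max}$ with the aggregate cost rate $Gc$ of the $G$ parallel streams, which is your computation $\E[\profit]=\E[V_\text{max}]-GcT/\delta$. Your scoping remark is also right and can be sharpened: since rewards are nonnegative, $\E[\max_{i\le g}V^{(i)}]\le\sum_{i\le g}\E[V^{(i)}]$, so for identically distributed $V^{(i)}$ the maximum over $g$ is attained at $g=1$, and the displayed grinding bound is never stronger than the single-seed condition $T\ge(\delta/c)\,\E[V]$.
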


When evaluation cannot be fully parallelized, time constraints shrink the effective $G$ that fits before the honest deadline, reducing $\E[V_\text{max}]$ but increasing model complexity. In either case, large grinding spaces translate into higher effective rewards and thus stricter delay requirements.

\subsection{Selective Abort Revisited}

Selective abort occurs when an adversarial party can learn the beacon output before others and has the ability to either publish it or suppress it. Let $p$ denote the per-round probability that the adversary possesses such abort leverage, for example the probability that the designated leader in that round is adversarial.

Let $V$ denote the reward from a single realization of the beacon output and $V_\text{abort}$ the reward under an optimal selective-abort strategy that allows repeated retries. In many simple models,
\[
  \E[V_\text{abort}] = \frac{\E[V]}{1-p},
\]
reflecting the fact that the adversary can discard unfavorable outcomes and wait for a favorable one, at the cost of expected $(1-p)^{-1}$ trials \cite{bunz2017proofs}. Applying the linear threshold condition to $V_\text{abort}$ yields:

\begin{theorem}[Selective-Abort-Resistant Threshold]
\label{thm:abort-extended}
In the presence of selective abort with leverage probability $p$ and effective reward $V_\text{abort}$, economic security requires
\begin{equation}
\label{eq:abort-extended}
  T \;\ge\; \frac{\delta}{c} \, \E[V_\text{abort}]
  \;\approx\; \frac{\delta}{c} \cdot \frac{\E[V]}{1-p}.
\end{equation}
\end{theorem}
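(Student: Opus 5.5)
The plan is to reduce the statement to Corollary~\ref{cor:linear} by treating the whole selective-abort campaign as one composite attack with a single aggregate reward $V_\text{abort}$ and a single evaluation cost. I work under the simplified model (Assumption~\ref{assump:simplified}). I also adopt the convention already used in Theorem~\ref{thm:grinding-extended}: the effective reward is compared against the cost of one VDF evaluation at adversarial speed, namely $cT/\delta$. So the first step is to fix the selective-abort strategy space precisely. In each round the adversary has leverage with probability $p$, independently across rounds. When it has leverage, it evaluates the VDF early at cost $cT/\delta$, inspects the output, and either publishes or suppresses it, which forces a re-run. The quantity $V_\text{abort}$ is defined as the reward realized by an optimal stopping rule over this retry process, in the sense of Theorem~\ref{thm:threshold}.

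The second step is the reduction itself. The profit of any such strategy has the form $\mathbf{1}_\text{succ}V_\text{abort} - cT/\delta$, with success guaranteed by Assumption~\ref{assump:simplified}(ii). Its expectation is therefore $\E[V_\text{abort}] - cT/\delta$. Definition~\ref{def:single-round-econ} requires this to be nonpositive for the optimal strategy. That gives the necessity direction, $T \ge (\delta/c)\E[V_\text{abort}]$, exactly as in the proof of Corollary~\ref{cor:linear}. If the inequality fails, the optimal abort strategy has strictly positive expected profit, so the round is not economically secure.

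The third step justifies the approximation $\E[V_\text{abort}] \approx \E[V]/(1-p)$. I would model the number of consecutive retries $N$ as geometric with continuation probability $p$, so that $\E[N] = (1-p)^{-1}$. Under a ``reward accrues on each retried draw'' model in which the draws are i.i.d. and independent of $N$, Wald's identity gives $\E[V_\text{abort}] = \E\bigl[\sum_{i=1}^{N} V_i\bigr] = \E[N]\,\E[V] = \E[V]/(1-p)$. This matches the formula stated before the theorem, and substituting it into the bound from the second step gives \eqref{eq:abort-extended}.

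I expect this third step to be the main obstacle. The identity is exact only under this stylized model. With a genuine best-of-$N$ selection the reward is instead $\E[\max_{i\le N} V_i]$, and that can be either smaller or larger than $\E[V]/(1-p)$. For this reason I would state the equality only under the i.i.d./Wald assumptions, and keep ``$\approx$'' in the theorem to flag that the exact requirement is the bound in terms of $\E[V_\text{abort}]$.
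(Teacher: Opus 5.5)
Your route is the paper's. The paper gives no separate proof: it applies Corollary~\ref{cor:linear} with $V$ replaced by $V_\text{abort}$, and it takes $\E[V_\text{abort}] = \E[V]/(1-p)$ as a property of ``many simple models.'' Your Wald computation is a reasonable way to make that heuristic explicit, and your caveat about best-of-$N$ selection is apt.

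The step that fails as written is the second one, the cost accounting. In your first step the adversary pays $cT/\delta$ in every leverage round in which it evaluates and then aborts. A campaign of $N$ draws therefore costs $N\,cT/\delta$, not $cT/\delta$. The precedent you cite does not support charging a single evaluation. Theorem~\ref{thm:grinding-extended} provisions $G$ streams at total rate $Gc$, which is where the $G$ in its denominator comes from; it compares $\E[V_\text{max}]$ against $G$ evaluations, not one. If you charge consistently inside your own Wald model, the expected profit is
\[
  \E\Bigl[\sum_{i=1}^{N}\bigl(V_i - cT/\delta\bigr)\Bigr] = \E[N]\bigl(\E[V] - cT/\delta\bigr).
\]
Its sign does not depend on $p$, so the $1/(1-p)$ amplification cancels and you would only obtain $T \ge (\delta/c)\E[V]$.

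To land on \eqref{eq:abort-extended}, state the hypothesis the paper uses implicitly when it plugs $V_\text{abort}$ into the linear condition. The whole abort campaign is charged a single accelerated evaluation $cT/\delta$, while its reward aggregates to $V_\text{abort}$. Treat this as a modeling assumption. Remove the per-retry charge from your strategy description, and do not attribute the convention to Theorem~\ref{thm:grinding-extended}. With that assumption, your second step is exactly the proof of Corollary~\ref{cor:linear}.

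Your third step also sits uneasily with this. ``Reward accrues on each retried draw'' is not selective-abort semantics, since suppressed outputs earn nothing. Moreover, once each draw is a separately rewarded attack, it naturally carries its own cost, which is precisely what produces the cancellation above. Present $\E[V]/(1-p)$ as the paper does, as an assumed model of the effective reward, not as a consequence of the retry dynamics you set up.
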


Even modest values of $p$ can substantially increase the delay required for economic security. This highlights the importance of protocol mechanisms that eliminate or sharply constrain abort leverage, such as enforcing on-chain randomness publication with strong liveness guarantees.

We note that selective abort can be partially mitigated if any honest node can independently compute and broadcast the VDF output once the seed is known. In such designs, abort leverage is limited to the window before honest evaluators complete their computation. However, this mitigation relies on the assumption that honest nodes have sufficient computational resources and network connectivity to complete and disseminate the VDF output promptly. In practice, if the adversary finishes the VDF significantly faster (due to hardware advantage $\delta$), there exists a window of duration $T - T/\delta = T(1 - 1/\delta)$ during which only the adversary knows the output. During this window, the adversary can act on the information (e.g., front-running trades, adjusting positions) without needing to suppress the output. Thus, while broadcasting mitigates full abort attacks, it does not eliminate the early-revelation advantage that our economic model captures.

\subsection{Multiple Adversaries and Equilibrium Analysis}

Up to this point, we have considered a single adversary. In practice, however, multiple rational agents may compete for the same per-round reward $V$. Their incentives interact: if $k$ agents attack and the reward is winner-takes-all, each one expects only $V/k$ in payoff.

To capture this interaction, we model the round as a symmetric  $n$-player game under the simplified assumptions introduced above.

\paragraph{Game definition.}
There are $n$ symmetric players. Each player $i$ chooses a pure action $a_i \in \{0,1\}$: attack ($1$) or not ($0$). If $k = \sum_i a_i > 0$ players attack, one is chosen uniformly at random to receive reward $V$, and all $k$ pay cost $cT/\delta$. If $k=0$, no reward or cost is realized.

Given $k$ attackers, the expected profit for any attacker is:
\[
  u(k) = \frac{1}{k} \E[V] - \frac{cT}{\delta}.
\]

\paragraph{Symmetric mixed equilibrium.}
We focus on symmetric mixed strategies: each player attacks independently with probability $p \in [0,1]$. The number of attackers is then $K \sim \text{Binomial}(n,p)$.

\begin{theorem}[Symmetric Mixed Equilibrium]
\label{thm:mixed-equilibrium}
In the $n$-player attack game described above, there exists a symmetric mixed-strategy Nash equilibrium in which each player attacks with probability $p^\star \in [0,1]$ satisfying
\begin{equation}
\label{eq:equilibrium-condition}
  \E\Big[ \frac{1}{K} \,\big|\, K \ge 1 \Big] \E[V] = \frac{cT}{\delta}.
\end{equation}
Moreover:
\begin{itemize}[leftmargin=*]
  \item If $\E[V] < cT/\delta$, then $p^\star = 0$ (no one attacks) is the unique symmetric equilibrium.
  \item If $\E[V] > cT/\delta$ and $n$ is large, then there exists $p^\star \in (0,1)$ such that $\E[1/K \mid K\ge1] \approx 0$ and attacks occur with positive probability in equilibrium.
\end{itemize}
\end{theorem}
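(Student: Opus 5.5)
The plan is to reduce the equilibrium condition to a one-dimensional indifference equation, then argue by monotonicity and the intermediate value theorem. Write $C = cT/\delta$ for the attack cost and $\mu = \E[V]$. Fix a symmetric profile in which every opponent attacks independently with probability $p$. A deviating player $i$ who attacks faces $J \sim \text{Binomial}(n-1,p)$ other attackers. It wins with conditional probability $1/(J+1)$, so its expected payoff is
\[
  U(p) = g(p)\,\mu - C, \qquad g(p) = \E\Big[\frac{1}{J+1}\Big],
\]
while not attacking yields $0$. A symmetric mixed profile $p^\star \in (0,1)$ is then an equilibrium exactly when player $i$ is indifferent, i.e.\ $g(p^\star)\mu = C$.

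This is the precise form of~\eqref{eq:equilibrium-condition}. The factor $\E[1/K \mid K\ge 1]$ in the statement should be read as the attacker's-eye expectation $g(p)$: the expectation of $1/K$ conditional on player $i$ being one of the attackers, so that $K = J+1$. I will make this identification explicit, since the unconditional-on-$i$ quantity $\E[1/K \mid K \ge 1]$ is not what enters a single player's best response.

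The first key step is a closed form. Using $\binom{n-1}{j}/(j+1) = \binom{n}{j+1}/n$, one gets
\[
  g(p) = \frac{1-(1-p)^n}{np} \quad (p>0), \qquad g(0)=1 .
\]
I will show $g$ is continuous and strictly decreasing on $[0,1]$, with $g(1)=1/n$. The cleanest route is that $1/(J+1)$ is a decreasing function of $J$, and $\text{Binomial}(n-1,p)$ is stochastically increasing in $p$ (coupling with uniforms). Alternatively, one can show that $(1-(1-p)^n)/p = \sum_{k=0}^{n-1}(1-p)^k$ is strictly decreasing in $p$. Hence $U(p)$ is strictly decreasing in $p$, with range $[\mu/n - C,\ \mu - C]$.

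The case analysis then follows.

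\textbf{Case $\mu < C$.} Here $U(p) \le \mu - C < 0$ for every $p$, so attacking is strictly dominated. Therefore $p^\star = 0$ is an equilibrium, and no $p>0$ can be one, since it would require $U(p) \ge 0$. This gives uniqueness.

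\textbf{Case $\mu > C$.} Now $U(0) > 0$, so $p=0$ is not an equilibrium, and attacks occur with positive probability. There are two subcases.
\begin{itemize}[leftmargin=*]
  \item If $\mu/n < C$, i.e.\ $n > \mu/C$, which always holds for $n$ large, then $U(1) < 0$. The intermediate value theorem together with strict monotonicity gives a unique $p^\star \in (0,1)$ with $g(p^\star) = C/\mu$, which verifies~\eqref{eq:equilibrium-condition}.
  \item If $\mu/n \ge C$, the pure profile $p^\star = 1$ is the equilibrium. This explains why the interior statement needs $n$ large.
\end{itemize}

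Existence in general, including the boundary equality case $\mu = C$ and the pure-strategy case, also follows from these monotonicity facts. If one prefers a uniform argument, Nash's theorem for finite symmetric games guarantees a symmetric equilibrium.

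For the asymptotic remark, I will use $g(p) \le 1/(np)$ to get $p^\star \le \mu/(nC)$, so $p^\star = \Theta(1/n)$ and $p^\star \to 0$. At the same time $n p^\star$ converges to the root $\lambda$ of $(1-e^{-\lambda})/\lambda = C/\mu$, so the expected number of attackers stays bounded away from $0$.

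The main obstacle is interpretive rather than technical. The statement's claim that ``$\E[1/K\mid K\ge 1] \approx 0$'' at equilibrium is incompatible with the indifference equation when $\mu > C$, since at $p^\star$ that quantity equals $C/\mu > 0$. I would therefore prove the corrected version: $p^\star$ is small, of order $1/n$, while the probability that at least one player attacks converges to $1 - e^{-\lambda} > 0$. I would flag that the vanishing quantity is $p^\star$, not the expected share.
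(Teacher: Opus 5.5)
Your proof is correct. Its skeleton matches the paper's sketch: indifference of a randomizing player, continuity plus the intermediate value theorem, and sign analysis at the endpoints. The difference is that you compute the payoff-relevant quantity correctly, and the paper's sketch does not.

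The paper takes an attacker's expected share to be $\E[1/K \mid K \ge 1]$ with $K \sim \mathrm{Binomial}(n,p)$. As you say, conditioning on a \emph{given} player attacking instead yields the size-biased law $K = 1+J$ with $J \sim \mathrm{Binomial}(n-1,p)$. The two quantities genuinely differ. For $n=2$, $\E[1/K \mid K \ge 1] = \frac{4-3p}{2(2-p)}$, while $g(p) = 1 - \frac{p}{2}$. Take $\E[V]=1$ and $cT/\delta = 3/4$. The paper's equation is then solved by $p = 2/3$, where attacking earns $g(2/3) - 3/4 = -1/12 < 0$. The true equilibrium is your root $p^\star = 1/2$. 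So your reinterpretation is necessary, not cosmetic: the paper's intermediate value argument produces a root of an equation that does not characterize equilibrium.

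Your closed form $g(p) = (1-(1-p)^n)/(np)$ also buys three things the sketch lacks.
\begin{itemize}
\item It proves the strict monotonicity the sketch only asserts, which gives uniqueness of the interior root.
\item It shows the equation characterizes only interior equilibria. Neither $p^\star = 0$ when $\E[V] < cT/\delta$ nor the pure $p^\star = 1$ when $\E[V]/n > cT/\delta$ satisfies it, contrary to the theorem's opening sentence.
\item It gives the right reading of the second bullet. The sketch's ``decreases \dots\ to $0$ as $p \to 1$ for large $n$'' concerns the endpoint value $\E[V]/n$, which is what makes the intermediate value theorem applicable. At $p^\star$ the share equals $cT/(\delta\,\E[V])$, not something $\approx 0$.
\end{itemize}

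The only nit: strict decrease of $g$ requires $n \ge 2$ (for $n = 1$, $g \equiv 1$). Your case split already routes that case into the pure-strategy branch.
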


\begin{proof}[Proof sketch]
Under a symmetric mixed strategy with attack probability $p$, the expected profit for a player conditional on attacking is
\[
  \E\big[ u(K) \mid \text{player attacks} \big]
  = \E\Big[ \frac{1}{K} \,\Big|\, K \ge 1 \Big] \E[V] - \frac{cT}{\delta}.
\]
In a symmetric Nash equilibrium, this expected profit must be zero for any player who randomizes between attacking and not attacking. This yields equation~\eqref{eq:equilibrium-condition}. Existence follows from continuity of the left-hand side in $p$ and the observation that it decreases from $\E[V]$ at $p \to 0$ to $0$ as $p \to 1$ for large $n$. The boundary cases follow by sign analysis.
\end{proof}

Theorem~\ref{thm:mixed-equilibrium} shows that even when an \emph{individual} attack has only marginal expected profit, competition among multiple rational agents can sustain a non-zero equilibrium attack rate unless the delay~$T$ is large enough to push $\mathbb{E}[V]$ below the effective cost threshold.  
From a protocol-design perspective, this indicates that a stricter condition than the single-attacker bound may be needed to suppress equilibrium attack activity.

A conservative design principle is to require that honest behavior \emph{strictly dominates} attacking, meaning that a player earns negative expected profit from attacking even when acting alone. This recovers the single-attacker condition $T \ge (\delta/c)\E[V]$ as a sufficient but possibly suboptimal requirement.

\paragraph{Coalition Formation.}
In practice, adversaries may form coalitions to share hardware costs and pool rewards. Consider a coalition of $m$ players who jointly invest in hardware with speedup $\delta$ and share the cost equally. The per-member cost is $c/m$ per unit time, while the coalition's expected reward remains $\E[V]$ (assuming winner-takes-all among coalitions, with internal redistribution).

The coalition's economic security condition becomes:
\[T \geq \frac{\delta m}{c} \E[V],\]
which is $m$ times more demanding than the single-adversary condition. This indicates that coalition formation amplifies the economic threat: a coalition of $m = 10$ rational agents requires delays 10$\times$ longer than a lone attacker to ensure economic security. Protocol designers should therefore estimate not only individual adversarial capabilities but also the plausible coalition size in their deployment context.


\section{Evaluation Methodology and Case Studies}
\label{sec:evaluation}

To illustrate the implications of our framework, we describe how to instantiate the parameters $(\delta, c, V)$ in realistic settings and sketch representative case studies for VDF-based beacon designs.

\subsection{Parameter Estimation}

\paragraph{Adversarial speedup $\delta$.}
We estimate $\delta$ by comparing:
\begin{itemize}[leftmargin=*]
  \item The performance of a reference honest implementation, e.g., on commodity CPUs typical of validators, and
  \item The performance of an optimized implementation on high-end hardware, e.g., FPGAs, ASICs, or top-tier cloud instances.
\end{itemize}
Existing VDF benchmarks suggest that specialized hardware can achieve speedups ranging from factors of $2$--$10$ compared to naive implementations, depending on construction and engineering effort \cite{langer2011virtual}.

\paragraph{Cost parameter $c$.}
We derive $c$ from cloud-pricing data or amortized hardware costs. For cloud instances, $c$ is computed as the cost-per-second of renting a machine capable of the relevant performance level. For custom hardware, $c$ includes capital expenditure amortized over lifetime plus operational expenses, including energy, cooling, and maintenance.

\paragraph{Reward $V$.}
Estimating $V$ is more protocol-specific. In blockchain contexts, $V$ may include:
\begin{itemize}[leftmargin=*]
  \item Expected MEV attributable to early knowledge of beacon outputs.
  \item Increased probability of being selected as validator or committee member.
  \item Gains from side bets or derivatives conditioned on beacon outcomes.
\end{itemize}
Empirical MEV estimates from block explorers and mempool data can provide rough distributions and upper bounds for $V$. In many designs, $V$ is highly skewed: most rounds have low reward, but occasional spikes offer large gains. Our framework accommodates these distributions via $\E[V]$ and tail bounds.

\subsection{Illustrative Case Studies}

We now instantiate our model in three representative settings to illustrate how the theoretical conditions translate into concrete parameter choices.  
Throughout, we interpret $T$ as a wall-clock delay in seconds and $c$ as a cost in USD per second of adversarial running time.  
Unless stated otherwise, we use the baseline parameters
\[
  \delta = 3.0,
  \qquad
  c = 0.05 \text{ USD/s},
\]
which roughly correspond to a factor-$3$ hardware speedup and moderate cloud rental prices.

\paragraph{Case Study 1: VDF Beacon with 2--5 Second Delay.}

Consider a blockchain protocol that proposes a VDF-based beacon with delay $T \in [2,5]$ seconds, evaluated on commodity CPUs.  
Suppose that:

\begin{itemize}[leftmargin=*]
  \item high-end accelerators (FPGAs/ASICs) achieve an effective speedup of $\delta = 3$ over the honest baseline;
  \item the adversary rents such hardware at a rate of $c = 0.05$\,USD per second of adversarial running time;
  \item the expected MEV per round is $\E[V] = 10$\,USD in typical conditions, with spikes to $50$--$100$\,USD during congestion \cite{judmayer2022estimating}.
\end{itemize}

In the simplified single-round model, the expected profit from attacking in a round with reward $V$ is
\[
  \E[\profit(T)] \;=\; V - \frac{cT}{\delta}.
\]
For the parameters above,
\[
  \frac{c}{\delta} = \frac{0.05}{3} \approx 0.0167,
\]
so the expected cost per round is approximately $0.0167 T$\,USD.  
The break-even delay $T^\star$ that makes $\E[\profit(T^\star)] = 0$ is
\[
  T^\star \;=\; \frac{\delta}{c}\,\E[V] \;\approx\; 60\,\E[V],
\]
with $T^\star$ in seconds and $\E[V]$ in USD. For the three reward levels above:
\begin{align*}
  \E[V]=10\text{ USD} &: \quad T^\star \approx 600\text{ s } (\approx 10\text{ min}),\\
  \E[V]=50\text{ USD} &: \quad T^\star \approx 3{,}000\text{ s } (\approx 50\text{ min}),\\
  \E[V]=100\text{ USD} &: \quad T^\star \approx 6{,}000\text{ s } (\approx 100\text{ min}).
\end{align*}

\begin{figure}[htbp]
  \centering
  \includegraphics[width=0.8\linewidth]{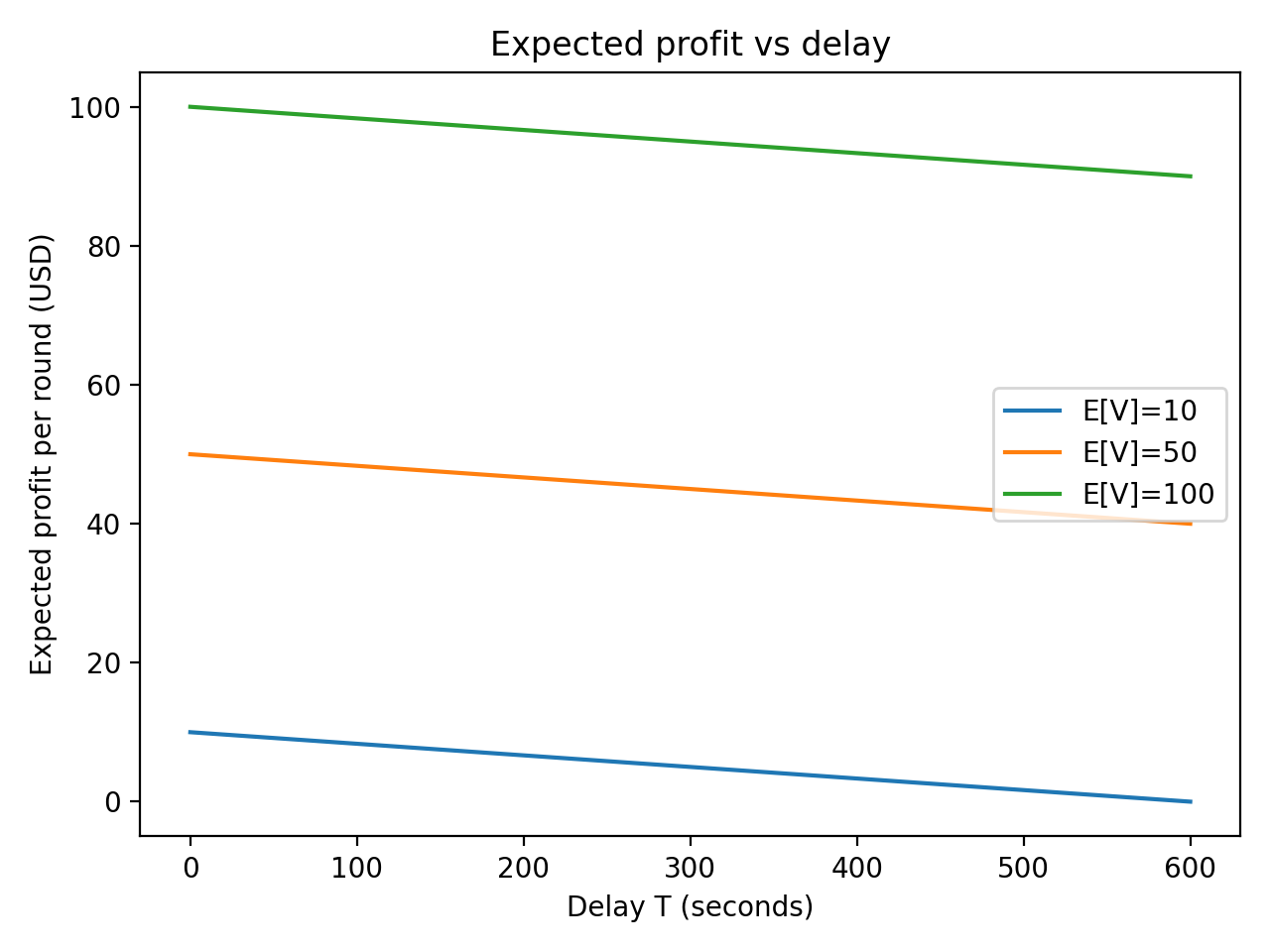}
  \caption{Expected profit per attack as a function of delay $T$ for different reward levels, assuming $c = 0.05$\,USD/s and $\delta = 3$. The economically secure region lies below the horizontal axis. Delays of a few seconds are far from sufficient when $\E[V]$ reaches tens of USD.}
  \label{fig:profit-vs-delay}
\end{figure}

Figure~\ref{fig:profit-vs-delay} plots $\E[\profit(T)]$ as a function of $T$ for these three reward levels.  
Delays of 2--5 seconds lie at the far left of the plot, deep in the region where attacks remain strongly profitable whenever $\E[V] \gtrsim 10$\,USD.

This case study shows that once MEV reaches even modest levels, economically secure delays are on the order of minutes rather than seconds, unless hardware is significantly more expensive or protocol design sharply limits MEV.

\paragraph{Case Study 2: Public Randomness Service.}

Next, consider a public randomness service, for instance, a consortium-operated beacon, that emits one output every $\Delta$ seconds using a VDF running on dedicated hardware.  
Assume:

\begin{itemize}[leftmargin=*]
  \item the beacon uses a VDF whose delay parameter is set to $T = \Delta$;
  \item the operator places a firm upper bound $V_{\max}$ on the economic value at stake in each draw by limiting stakes or prize size;
  \item a successful manipulation yields at most $V_{\max}$ in benefit to an adversary.
\end{itemize}

If the attacker can access hardware with speedup $\delta = 3$ and cost rate $c = 0.05$\,USD/s, then defending against the worst-case reward $V_{\max}$ yields the single-round threshold
\[
  \Delta^\star \;\ge\; T^\star \;=\; \frac{\delta}{c}\,V_{\max} \;=\; 60\,V_{\max}.
\]
Figure~\ref{fig:delta-vs-vmax} plots the required delay $\Delta^\star$ as a function of $V_{\max}$ under these parameters.

\begin{figure}[htbp]
  \centering
  \includegraphics[width=0.8\linewidth]{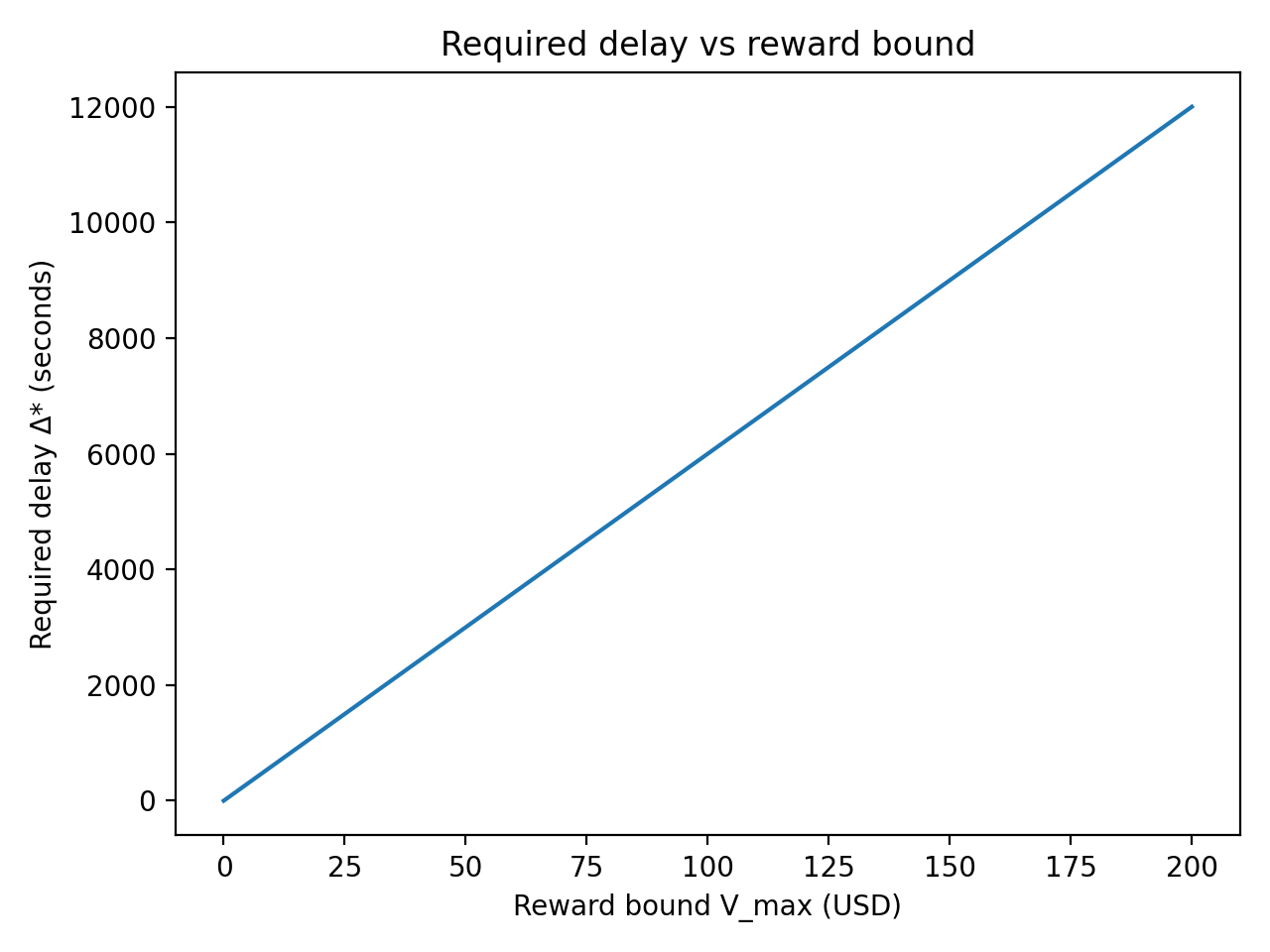}
  \caption{Required delay $\Delta^\star$ as a function of the bound on adversarial reward $V_{\max}$, with $\delta = 3$ and $c = 0.05$\,USD/s. For example, $V_{\max}=100$\,USD requires $\Delta^\star \approx 6{,}000$\,s (about 100 minutes) to be economically secure under this hardware model.}
  \label{fig:delta-vs-vmax}
\end{figure}

This reveals an inherent tradeoff in system design. For a given hardware model, economic security can be achieved only by setting $\Delta$ to a comparatively long duration, often tens of minutes for moderate $V_{\max}$, or by imposing firm constraints on the maximum economic value per round. Limiting the value at risk enables significantly shorter delays.

\paragraph{Case Study 3: Grinding in Committee Selection.}

Finally, we examine grinding in protocols where proposers can influence the seed used by the beacon, for example by selecting among multiple transaction orderings or candidate blocks.  
Suppose an adversary can explore $G$ candidate seeds $s_1,\ldots,s_G$ and evaluate the VDF for each, then choose the most favorable outcome.

Let $V^{(i)}$ denote the reward associated with seed $s_i$, and define
\[
  V_{\max} = \max_{1 \le i \le G} V^{(i)}.
\]
For illustration, assume that:

\begin{itemize}[leftmargin=*]
  \item the rewards $V^{(i)}$ associated with different seeds are independent and identically distributed with an exponential distribution of mean $\mu = 10$,USD. Under this distribution, the expected maximum over $G$ trials satisfies
    \[
      \E[V_{\max}] = \mu \, H_G,
    \]
    where $H_G$ is the $G$th harmonic number;
  \item the adversary can deploy $G$ partially shared computation streams, leading to an effective cost that scales as $c_{\mathrm{eff}}(G) = c\,G^{1/2}$ rather than increasing linearly with $G$.
  \item the computational speedup available to the adversary remains fixed at $\delta = 3$.
\end{itemize}

In this toy model, Theorem~\ref{thm:grinding-extended} suggests a required delay
\[
  T^\star_{\mathrm{grind}}(G)
  \;\approx\;
  \frac{\delta}{c_{\mathrm{eff}}(G)}\,\E[V_{\max}]
  \;=\;
  \frac{\delta \mu}{c}\,\frac{H_G}{G^{1/2}}.
\]

Figure~\ref{fig:grinding-delay} plots $T^\star_{\mathrm{grind}}(G)$ for $G$ up to $2^{10}$ on a logarithmic $G$-axis.

\begin{figure}[htbp]
  \centering
  \includegraphics[width=0.8\linewidth]{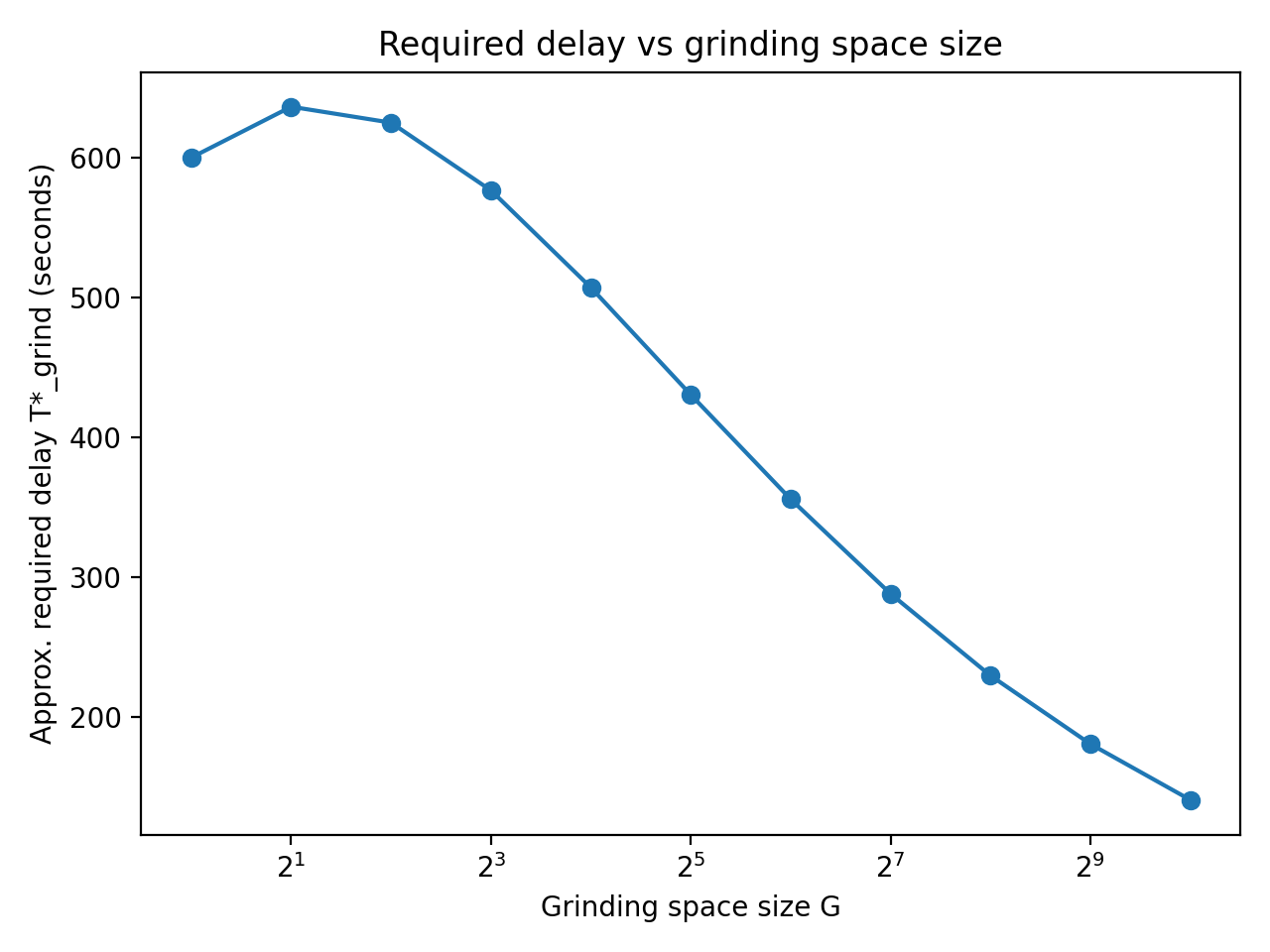}
  \caption{Illustrative required delay $T^\star_{\mathrm{grind}}(G)$ as a function of grinding space size $G$ on a log scale (base 2), assuming $\mu = 10$\,USD, $\delta = 3$, $c = 0.05$\,USD/s, and sublinear cost scaling $c_{\mathrm{eff}}(G) = c G^{1/2}$. Grinding increases the effective reward via $V_{\max}$, and depending on hardware scaling, may still require substantially larger delays.}
  \label{fig:grinding-delay}
\end{figure}

Although this example is stylized, it illustrates a robust qualitative point:
if grinding opportunities are not tightly constrained, even moderate $G$ can significantly amplify the effective economic value of manipulating the beacon, forcing protocol designers either to increase $T$ or to reduce $G$ by changing seed-derivation and leader-selection rules.

\paragraph{Case Study 4: Ethereum-Style Validator Selection with RANDAO.}
We consider a concrete scenario inspired by Ethereum's beacon chain. In each slot (12 seconds), a block proposer is selected pseudo-randomly using RANDAO. The proposer can extract MEV from transaction ordering.

According to Flashbots data, median MEV per block was approximately \$50 in 2023, with 99th-percentile values exceeding \$10,000 during periods of high volatility. Suppose a VDF-based beacon replaces RANDAO to prevent last-revealer manipulation. We ask: what delay $T$ is required for economic security?

Using current FPGA benchmarks for repeated squaring in RSA groups, a state-of-the-art FPGA achieves approximately $\delta = 2.5\times$ speedup over optimized CPU implementations. Cloud FPGA rental (e.g., AWS F1 instances) costs approximately \$1.65/hour $\approx$ \$0.00046/second.

For median MEV (\$50): $T^* = \delta \cdot \E[V] / c = 2.5 \times 50 / 0.00046 \approx 271{,}739$ seconds $\approx 3.1$ days. For 99th-percentile MEV (\$10,000): $T^* \approx 54{,}347{,}826$ seconds $\approx 629$ days.

These numbers are clearly impractical for a 12-second slot time, confirming that VDF-based beacons alone cannot provide economic security against MEV-motivated manipulation without additional protocol-level defenses (e.g., encrypted mempools, MEV redistribution, or threshold VDF schemes that distribute computation among multiple parties).

\section{Design Guidelines and ESDP}
\label{sec:design}

We summarize our findings as practical design guidelines and introduce an abstraction for protocol specifications.

\subsection{Design Guidelines}

\paragraph{Step 1: Model the reward distribution.}
Identify all channels through which an adversary could profit by manipulating or learning beacon outputs early. Estimate $\E[V]$ and, where necessary, tail bounds for $V$ and for $V_\text{max}$ in grinding scenarios.

\paragraph{Step 2: Estimate adversarial capabilities.}
Determine plausible ranges for $\delta$ based on performance benchmarks, and derive $c$ from cloud-pricing data or anticipated hardware costs. Consider adversaries that pool resources or rent specialized hardware dynamically.

\paragraph{Step 3: Choose $T$ to satisfy economic security.}
Apply Theorem~\ref{cor:linear} and its extensions to derive required lower bounds on $T$:
\[
  T \ge \max \left\{
    \frac{\delta}{c} \E[V], \;
    \frac{\delta}{cG} \E[V_\text{max}], \;
    \frac{\delta}{c} \beta(p) \E[V], \;
    \dots
  \right\}.
\]

\paragraph{Step 4: Account for multi-round effects.}
If the protocol's security depends on sequences of beacon outputs, incorporate multi-round constraints such as~\eqref{thm:multiround} into parameter selection.

\paragraph{Step 5: Revisit parameters periodically.}
Market conditions change. Cloud prices, hardware efficiency, and MEV volatility evolve over time. Designers should treat $T$ as a dynamically adjustable parameter that is periodically recalibrated using up-to-date costs and rewards.

\paragraph{Cost to Honest Nodes.} Our optimization for $T^*$ focuses on the adversary's cost-benefit analysis. However, the delay parameter also imposes costs on honest participants, who must wait $T$ seconds per round before the beacon output is available. In latency-sensitive applications such as block production, longer delays reduce throughput and increase confirmation times. The optimal delay from a system-design perspective must therefore balance economic security (requiring large $T$) against usability (requiring small $T$). Formally, the designer solves $\min_T \mathcal{L}(T)$ subject to $T \geq T^*$, where $\mathcal{L}(T)$ captures the system-level cost of delay (e.g., reduced throughput, increased finality time). This framing makes explicit that economic security is one constraint among several in practical parameter selection.

\subsection{Economically Secure Delay Parameters (ESDP)}

To facilitate adoption, we propose the following abstraction:

\begin{definition}[Economically Secure Delay Parameters (ESDP)]
An \emph{Economically Secure Delay Parameter} for a VDF-based beacon is a delay value $T^*$ such that, given specified bounds on adversarial speedup $\delta$, cost parameter $c$, and reward distribution (possibly including grinding and abort effects), the beacon is economically secure for all $T \ge T^*$.
\end{definition}

Protocol specifications can declare ESDPs explicitly, e.g.:

\begin{quote}
  \emph{For the expected range of MEV and hardware costs, and assuming adversarial speedup $\delta \le 4$, the delay parameter $T$ must satisfy $T \ge 8 \mathrm{s}$ to maintain economic security.}
\end{quote}

This enables designers and auditors to reason about security in a transparent, economically grounded manner and to adjust parameters as conditions evolve.

\section{Related Work}
\label{sec:related}

\paragraph{Verifiable Delay Functions.}
VDFs \cite{boneh2018verifiable} were introduced to formalize sequential work with succinct verification. Prior work has focused on constructing efficient VDFs based on repeated squaring in groups of unknown order, isogenies, and other number-theoretic assumptions \cite{ephraim2020continuous}\cite{wesolowski2020efficient}\cite{zhu2022low}; optimizing implementations \cite{mahmoody2019can}\cite{song2020high}; and integrating VDFs into systems such as randomness beacons and blockchains \cite{orlicki2020fair}\cite{venugopalan2023always}.

\paragraph{Randomness beacons.}
Randomness beacons \cite{choi2023sok}\cite{galindo2021fully} have a long history, from centralized sources to distributed protocols based on threshold signatures \cite{cascudo2023mt}, VRFs, and commit-reveal schemes \cite{choi2023bicorn}\cite{lee2025commit}. VDF-based beacons aim to improve bias-resistance and fairness in the presence of adaptive adversaries. Our work is orthogonal to cryptographic security of beacon constructions: we assume correctness and soundness of the underlying scheme and focus on economic incentives.

\paragraph{Rational cryptography and cryptoeconomics.}
Rational cryptography studies protocols where parties are modeled as economically rational agents \cite{caballero2007rational}\cite{garay2013rational}. In blockchain research, cryptoeconomic analyses often quantify incentives for consensus participation, selfish mining, and MEV extraction \cite{ganesh2024secure}\cite{huang2025optimizing}. Our contribution adapts these ideas to the specific setting of VDF-based randomness beacons, highlighting the need to configure parameters with economic considerations in mind.

\paragraph{MEV and protocol-level incentives.}
The literature on Maximal Extractable Value documents the significant value that can be extracted by reordering, including, or excluding transactions \cite{gramlich2024maximal}\cite{zhao2025mitigating}. This value directly feeds into the reward parameter $V$ in our model. Our work suggests that parameter choices for VDF-based beacons must consider MEV dynamics to remain economically secure.

\paragraph{Optimal RANDAO manipulation.}
Alpturer and Weinberg~\cite{alpturer2024optimal} study optimal RANDAO manipulation in Ethereum, concluding that manipulation bias is minimal under Ethereum's current parameters. Their analysis focuses on the combinatorial structure of RANDAO's XOR-based mixing and finds that the proposer's influence is bounded. Our work complements theirs by studying the economic incentives of VDF-based alternatives to RANDAO. While their result suggests RANDAO is relatively robust to bias in the short term, it does not address the economic security of VDF-based replacements, which face different attack surfaces (hardware speedup rather than last-revealer withholding). Together, the two analyses provide a more complete picture of randomness beacon security in blockchain systems.

\section{Discussion and Limitations}

Our framework relies on several modeling assumptions and simplifications. We briefly discuss key limitations.

\paragraph{Modeling $\delta$ and $c$.}
Estimating adversarial speedup and cost is inherently uncertain. Future hardware advances or economies of scale may significantly shift these parameters. We therefore recommend conservative estimates and periodic reevaluation.

\paragraph{Reward modeling.}
The reward distribution $V$ is protocol dependent and often heavy-tailed. While our theorems use $\E[V]$ and its variants, extreme-tail events may still be relevant for risk-averse designers. Extending the framework to explicitly incorporate risk preferences and worst-case guarantees is an interesting direction.

\paragraph{Dynamic strategies and complex environments.}
We focused on relatively simple attack strategies and states to obtain tractable analytic conditions. Real adversaries may use more complex strategies, including dynamic switching between attack modes, collusion, and integration with other economic activities. Our model can be extended to multi-agent settings and richer strategy spaces, but we leave detailed game-theoretic analysis to future work.

\paragraph{Cost Modeling and Amortization.}
We acknowledge that modeling cost as proportional to running time is a simplification. In practice, adversaries with purchased hardware face fixed capital costs amortized over many rounds, plus variable operating costs. Our framework accommodates this by setting $c$ to reflect the total amortized cost per unit of computation time. When capital costs dominate, $c$ decreases and the required delay $T^*$ increases accordingly, which is the conservative (safe) direction for protocol design.

\paragraph{Broader Beacon Designs.} Our analysis focuses on the simplest VDF beacon construction: a single VDF evaluated on a public seed. The design space for VDF-based beacons is considerably richer and includes threshold VDF schemes (where computation is distributed among multiple parties), chained constructions (where each round's output seeds the next), and hybrid designs combining VDFs with commit-reveal or verifiable random functions. While our economic framework does not directly apply to all such designs, the core insight---that delay parameters must be calibrated against economic incentives, not just cryptographic hardness---remains valid across the design space. Extending our analysis to threshold VDFs and chained constructions is an important direction for future work.

\paragraph{Beyond VDFs.}
Although our analysis focuses on VDF-based randomness beacons, the underlying economic perspective extends to a much broader range of randomness-generation mechanisms and cryptographic primitives. Any primitive whose security depends on parameter choices that influence adversarial cost is subject to the same fundamental tradeoffs between delay, hardware advantages, and economic incentives.

\section{Conclusion}
\label{sec:conclusion}

Verifiable Delay Functions provide a compelling foundation for secure randomness beacons in blockchains and other distributed systems. Yet cryptographic soundness alone does not guarantee safe deployment. A beacon is secure only if manipulating or prematurely learning its output is economically unprofitable for any rational adversary.

This work introduces a formal framework for reasoning about the economic security of VDF-based randomness beacons. We develop rational-adversary models, prove tight necessary and sufficient conditions linking delay parameters to hardware costs, adversarial speedup, and reward distributions, and extend the analysis to grinding attacks, selective abort, and multi-round settings. The resulting conditions yield actionable guidance for protocol designers and highlight the importance of aligning cryptographic guarantees with realistic economic environments.

We hope that the notion of \emph{Economically Secure Delay Parameters} and the methodology outlined in this work will become standard tools for the design and analysis of VDF-based beacons and related cryptographic mechanisms.



\bibliographystyle{ACM-Reference-Format}
\bibliography{refs} 

\appendix
\section*{Open Science Appendix}
\label{sec:openscience}

This paper follows the ACM CCS Open Science policy by documenting
all artifacts required to evaluate our results. All artifacts will be
provided to the program committee as an anonymized bundle via the
supplementary-material mechanism of the submission system. The
bundle contains no author-identifying metadata.

\subsection*{A. Artifacts Provided}

\paragraph{A.1 Jupyter notebook for numerical evaluation.}
We provide a single Python Jupyter notebook,
\texttt{vdf\_economic\_security.ipynb}, that reproduces all numerical
results and plots in Section~7. In particular, the notebook:
\begin{itemize}[leftmargin=*]
  \item implements the simplified single-round model
        $\mathbb{E}[\mathrm{profit}(T)] = \mathbb{E}[V] - (c/\delta)\,T$;
  \item computes and visualizes expected profit as a function of delay $T$
        for different reward levels $\mathbb{E}[V]$;
  \item computes and plots the required delay $T^\star = (\delta/c)\,V_{\max}$
        as a function of an upper bound $V_{\max}$ on per-round reward;
  \item implements a stylized grinding model with grinding space size $G$,
        harmonic expectation $\mathbb{E}[V_{\max}] = \mu H_G$, and
        sublinear cost scaling $c_{\mathrm{eff}}(G) = c\,G^{1/2}$, and
        plots the corresponding delay thresholds.
\end{itemize}
All figures in Section~7 are generated directly from this notebook.

\paragraph{A.2 Environment and dependencies.}
The notebook depends only on standard Python packages:
\texttt{numpy} and \texttt{matplotlib}. We include a short
\texttt{requirements.txt} specifying these dependencies. No
specialized hardware or external services are required.

\paragraph{A.3 Documentation.}
A short \texttt{README.md} file in the artifact bundle describes:
\begin{itemize}[leftmargin=*]
  \item how to open and run \texttt{vdf\_economic\_security.ipynb};
  \item how each figure in Section~7 is produced from specific cells;
  \item how to modify parameters $(\delta, c, \mathbb{E}[V], G)$ to explore
        alternative economic settings.
\end{itemize}

\subsection*{B. Artifacts Not Shared and Justifications}

\paragraph{B.1 Proprietary or non-public MEV traces.}
The motivation for our parameter choices references published MEV
estimates and hardware benchmarks from the literature and public
dashboards. We do not redistribute any proprietary raw MEV traces
or non-public datasets. Instead, the notebook uses simple parametric
models and synthetic values (for example, exponential rewards with
mean $\mu$ and bounded ranges for $V$) that are sufficient to
reproduce all figures and to verify the qualitative and quantitative
claims in the paper.

\paragraph{B.2 System-specific deployment data.}
We do not include logs, configuration files, or telemetry from any
production blockchain deployments. Such data may be subject to
privacy, contractual, or operational constraints. Our evaluation
relies only on abstracted parameter ranges and synthetic draws that
do not depend on deployment-specific details.

\subsection*{C. Access for Double-Blind Review}

All artifacts are accessible through an anonymous URL hosted on an independent file-sharing service:

https://anonymous.4open.science/r/VDF-Code-4343/

\subsection*{D. Reproducibility Statement}

Running the Jupyter notebook \texttt{vdf\_economic\_security.ipynb} in
a standard Python environment is sufficient to reproduce all
numerical results and plots in this paper. The theoretical
contributions (definitions, theorems, and proofs) are independent of
the artifacts and can be validated from the text alone.

\end{document}